\DeclareMathOperator*{\argmin}{arg\,min}
\DeclareMathOperator*{\argmax}{arg\,max}
\newcommand{\V}{\mathcal{V}}
\newcommand{\E}{\mathcal{E}}
\newcommand{\F}{\mathcal{F}}
\newcommand{\U}{\mathcal{U}}
\newcommand{\R}{\mathcal{R}}
\newcommand{\HH}{\mathcal{H}}
\newcommand{\blue}{\textcolor{black}}
\tikzset{>={Latex[width=2mm,length=2mm]}}
\tikzstyle{vertex}=[circle, draw]
\newtheorem{theorem}{Theorem}
\newtheorem{lemma}{Lemma}
\tikzstyle{stuff_fill}=[vertex, style=green, fill=black!10]
\newcounter{defcounter}
\newcounter{mycounter}
\begin{document}
	
    \title{Placement and Allocation of Virtual Network Functions: Multi-dimensional Case}
	
	\author{Gamal Sallam, Zizhan Zheng, and Bo Ji \thanks{This work was supported in part by the NSF under Grants CNS-1651947. A preliminary version of this work was presented at IEEE ICNP 2019 \cite{sallam2019placement}.
		
		Gamal Sallam (tug43066@temple.edu) is with the Department of
		Computer and Information Sciences, Temple University, Philadelphia, PA,  Zizhan Zheng (zzheng3@tulane.edu) is	with the
    	Department of Computer Science, Tulane University, New Orleans, LA, and Bo Ji (boji@vt.edu) is with the Department of Computer Science, Virginia Tech, Blacksburg, VA. Bo Ji is the corresponding author.
}}
	
	\maketitle

	\IEEEpeerreviewmaketitle
	\begin{abstract}
	Network function virtualization (NFV) is an emerging design paradigm that replaces physical middlebox devices with software modules running on general purpose commodity servers. While gradually transitioning to NFV, Internet service providers face the problem of where to introduce NFV in order to make the most benefit of that; here, we measure the benefit by the amount of traffic that can be served in an NFV-enabled network. This problem is non-trivial as it is composed of two challenging subproblems: 1) placement of nodes to support virtual network functions (referred to as VNF-nodes); 2) allocation of the VNF-nodes' resources to network flows. These two subproblems must be jointly considered to satisfy the objective of serving the maximum amount of traffic.  This problem has been studied for the one-dimensional setting, where all network flows require one network function, which requires a unit of resource to process a unit of flow. In this work, we consider the multi-dimensional setting, where flows must be processed by multiple network functions, which require a different amount of each resource to process a unit of flow. The multi-dimensional setting introduces new challenges in addition to those of the one-dimensional setting (e.g., NP-hardness and non-submodularity) and also makes the resource allocation subproblem a multi-dimensional generalization of the generalized assignment problem with assignment restrictions. To address these difficulties, we propose a novel two-level relaxation method that allows us to draw a connection to the sequence submodular theory and utilize the property of sequence submodularity along with the primal-dual technique to design two approximation algorithms. We further prove that the proposed algorithms have a non-trivial approximation ratio that depends on the number of VNF-nodes, resources, and a measure of the available resource compared to flow demand. Finally, we perform 
	trace-driven simulations to show the effectiveness of the proposed algorithms.
	\end{abstract}

	\section{Introduction}
	
	With the emergence of many new technologies and applications (such as autonomous vehicles, extended reality, and edge intelligence), the role of modern networks (beyond-5G and 6G) has evolved beyond providing basic connectivity services \cite{latva2020key}. It has now 
become imperative to provide various types of network services (such as security, performance
optimization, and value-added services) in modern networks.
Realizing such network services usually requires a new design paradigm, called Network function virtualization (NFV), where network functions (e.g., firewall, intrusion detection, and load balancer) that traditionally run in dedicated hardware are now replaced by software modules hosted on general purpose commodity servers \cite{Introduction2012}. Several advantages can be harnessed from this architecture such as reducing the deployment cost, increasing the agility, and improving the scalability. These advantages have encouraged major Internet service providers (ISPs) to consider this new architecture, and some of them have already started the transition to NFV \cite{Amdocs_whitepaper}.

	
	
	However, transitioning to NFV faces challenges from different perspectives. From network flows' perspective, each flow needs to be processed by certain types of network functions, and each network function requires a different amount of the resources at servers (e.g., CPU, memory, and I/O). In addition, flows generally require all of their traffic be fully processed by such functions to satisfy certain quality of services \cite{poularakis2017one}. From ISPs' perspective, transitioning to NFV usually happens in multiple stages for several reasons, including budget limitations and the desire to utilize the already provisioned hardware. Considering both perspectives leads to an important question: under a limited budget, how to efficiently introduce NFV in each stage such that the total traffic of fully processed flows is maximized? To answer this question, we need to address two main issues: 1) where to place nodes that support NFV (called VNF-nodes) without exceeding the given budget? And 2) how to allocate the VNF-nodes' resources to satisfy the requirements of network flows? We refer to this problem as joint VNF-nodes placement and resource allocation (VPRA). 
	
	Most of the previous work either does not consider a limited budget (e.g., \cite{sang2017provably}) or relaxes the resources constraint (e.g., \cite{poularakis2017one}). In \cite{sallam2021joint}, both the budget and resources constraints are considered, along with the requirement that flows must be fully processed. However, they consider a special case of the VPRA problem with the following characteristics: a) there is only one type of resource; b) all flows require the same network function; c) the network function requires one unit of resource to process each unit of flows (we refer to this setting as basic-VPRA). Even under such a simplified setting, the basic-VPRA is already quite challenging. It is shown in  \cite{sallam2021joint} that not only is this problem NP-hard, but it does not possess a useful property, called submodularity, which oftentimes leads to efficient solutions. In this work, we take one step further and extend the basic-VPRA problem to the setting with multiple network functions, multiple resources, and heterogeneous resource requirements. We refer to this generalization as multi-dimensional VPRA (multi-VPRA). 
	
	We systematically study the challenges of the multi-VPRA problem and show that the difficulties introduced by the generalization call for different design strategies and analytical techniques. Specifically, we decompose the original problem into two subproblems: placement and resource allocation. We show that the placement subproblem even without the requirement that flows must be fully processed is NP-hard. Moreover, the resource allocation subproblem is a multi-dimensional generalization of the generalized assignment problem with assignment restrictions, which is also NP-hard \cite{chekuri1999multi}. To address the placement subproblem, we introduce a novel two-level relaxation method that allows us to draw a connection to the sequence submodular (also called string submodular) theory \cite{zhang2016string, alaei2010maximizing} and design an efficient placement algorithm. 
	\blue{Note that sequence submodularity is a generalization of submodularity. Like submodular functions, sequence submodular functions also exhibit the diminishing returns property, meaning that the value of adding an item to a sequence decreases as the sequence expands. For sequence functions, forward (resp., backward) monotonicity means the value of the function increases when an item is added to the end (resp., beginning) of the sequence. We provide formal definitions of these properties in Section~\ref{subsec:sequence_submodular}.}
	For the resource allocation subproblem, we utilize the primal-dual technique \cite{williamson2011design} to design two efficient resource allocation algorithms. We combine the placement algorithm with the resource allocation algorithms and develop approximation algorithms with performance guarantees for the original non-relaxed multi-VPRA problem.
	
    Our main contributions are summarized as follows.
    \begin{itemize}
        \item First, we systematically study the challenges arising from the generalized multi-VPRA problem. In addition to the challenges faced by the basic-VPRA (such as NP-hardness and non-submodularity), we show that overcoming the non-submodularity of the placement subproblem is much harder and that the resource allocation subproblem is a multi-dimensional generalization of the generalized assignment problem with assignment restrictions, which is also more challenging.
        \item Second, we introduce a novel two-level relaxation method that enables us to convert the non-submodular placement subproblem into a sequence submodular optimization problem. In order to leverage the property of sequence submodularity, we generalize the concept of backward-monotone to approximate backward-monotone, extend the known results for backward-monotone to this generalized version, and utilize this new property to develop an efficient algorithm for the placement of VNF-nodes.
        \item Third, we utilize the primal-dual technique to design two efficient resource allocation algorithms. Moreover, we show that by combining the proposed placement algorithm and the two resource allocation algorithms, we can achieve an approximation ratio of $\frac{(e-1)(Z-1) } {4 e^2 Z (kR)^{1/(Z-1)}}$ and $\frac{(e-1)(Z-1)}{4e(Z-1+e Z R^{1/(Z-1)})}$ for the original non-relaxed multi-VPRA problem, respectively, where $k$ (resp. $R$) is the number of VNF-nodes (resp. resources), and $Z$ is a measure of the available resource compared to flow demand. When $Z$ goes to infinity, the approximation ratios become constants: $\frac{e-1}{4e^2}$ and $\frac{e-1}{4e^2+4e}$, respectively.  
        \item Finally, we conduct trace-driven simulations using Abilene dataset \cite{abilene} as well as datasets from SNDlib \cite{orlowski2010sndlib} to evaluate the performance of the proposed algorithms. 
    \end{itemize}
    
    \section{System model}

    \begin{table}[t]
        \centering
        \begin{tabular}{c|c}
            \hline
            \textbf{Symbol} & \textbf{Definition} \\  \hline \hline
            
            $\V$ & The set of nodes in the network\\  \hline
            $\E$ & The set of edges connecting the nodes in the network\\  \hline
            $\F$ & The set of flows\\  \hline
            $\lambda_f$ & The traffic rate of flow $f$ \\  \hline
            $\boldsymbol{\lambda}$ & The flow rate vector \\  \hline
            $\V_f$ & The set of nodes along the path of flow $f$ \\   \hline
            \begin{tabular}{l}
                $\F_\U$
          \end{tabular} &
          \begin{tabular}{l}
            The set of all flows whose path has at least \\ one node in a subset of nodes $\mathcal{U}$
          \end{tabular} \\ \hline
            $\Phi$ & The set of network functions \\   \hline
            $\Phi_f$ & The set of network functions required by flow $f$\\   \hline
            $\F(\phi)$ & The set of flows that require network function $\phi$ \\   \hline
            $\mathcal{R}$ & The set of resource types \\   \hline
            $c_v^r$ & The total amount of resource $r$ available at node $v$ \\   \hline
            \begin{tabular}{l} $\beta_\phi^r$ \end{tabular} & \begin{tabular}{l} The amount of resource $r$ needed for function \\ $\phi$ to process one unit of a network flow \end{tabular} \\ \hline
            $x_f^v$ & The portion of flow $f$ assigned to VNF-node $v$ \\   \hline
            $\boldsymbol{X}(\U)$ & The flow assignment matrix of a set of nodes $\U$ \\  \hline
        \end{tabular}
         \caption{Summary of notations}
        \label{table:symbols}
    \end{table}

We consider a network graph $G=(\V, \E)$, where $\V$ is the set of nodes, with $V = |\V|$, and $\E$ is the set of edges. We have a set of flows $\F$, with $F = |\F|$. We use $\lambda_f$ to denote the traffic rate of flow $f \in \F$. Let $\boldsymbol{\lambda} \triangleq [\lambda_{f_1}, \dots, \lambda_{f_F}]$ be the flow rate vector. As we mention earlier, the transition to NFV happens in two phases: the planning phase and the production phase. In this work we focus on the planning phase and assume that the traffic of flow $f$ will be sent along a predetermined path (e.g., a shortest path), and the set of nodes along this path is denoted by $\V_f$. We use $\F_\U$ to denote the set of all flows whose path has at least one node in a subset of nodes $\mathcal{U} \subseteq \mathcal{V}$, i.e., $\F_{\U} = \{f \in \F ~|~ \V_f \cap \U \neq \emptyset\}$. When a node can support some VNFs, we call it a VNF-node. Since ISPs have a limited budget to deploy VNFs in their networks, they can only choose a subset of nodes $\U \subseteq \V$ to become VNF-nodes. 
    
    We consider a set of network functions denoted by $\Phi$. Each flow needs to be processed by one or more network functions. We assume that flows can get processed by all required network functions at the same VNF-node. This has the potential of reducing the overhead of maintaining flow state across VNF-nodes and can be realized by the NFV architecture \cite{Introduction2012}, which allows hosting different types of network functions at the same VNF-node. The set of network functions required by flow $f$ is denoted by $\Phi_f$. The set of flows that require network function $\phi \in \Phi$ is denoted as $\F(\phi)$. Each VNF-node $v \in \V$ can host one or more network functions. We use $\mathcal{R}$ to denote the set of resource types at VNF-nodes (e.g., memory, CPU, and I/O), with $R = |\R|$. Each network function $\phi$ requires $\beta_\phi^r$ units of resource $r \in \R$ to process one unit of a network flow. The traffic rate $\lambda_f$ of each flow can be split and can be processed at multiple VNF-nodes. We use $x_f^v$ to denote the portion of flow $f$ that is assigned to VNF-node $v$ and use $\boldsymbol{X}(\V) \in \mathcal{R}^{F \times V}$ to denote the assignment matrix. 
    \blue{In Table~\ref{table:symbols}, we summarize the main notations that will be used in the problem formulation.} 
    
    As we mentioned earlier, the benefits of processed traffic can be harnessed from fully processed flows, i.e., flows that have all of their traffic fully processed at VNF-nodes. Hence, when a flow traverses VNF-nodes and there are sufficient resources on these VNF-nodes to process all of its rate, i.e., $\sum_{v \in \V_f \cap \U} x_f^v \geq \lambda_f$, then the flow is counted as a processed flow. Therefore, the total fully processed traffic for a subset of VNF-nodes $\U \subseteq \V$ can be expressed as follows:
    \begin{equation}
    \label{eq:objectiveJ_1}
    {J_1}(\U, \boldsymbol{X}(\U)) \triangleq \sum_{f\in \F} \lambda_f \boldsymbol{1}_{\{ \sum_{v \in \V_f \cap \U} x_f^v \geq \lambda_f \}},
    \end{equation}
    where $\boldsymbol{1}_{\{\cdot\}}$ is the indicator function. However, there is a total amount of each resource available at the nodes, and the amounts could be different at different nodes. We use $c_v^r$ to denote the total amount of resource $r$ at node $v$. Then, the following constraints should be satisfied:
    \begin{equation}
    \begin{cases}
    \sum_{\phi \in \Phi} \beta_\phi^r \sum_{f \in \F(\phi)} x_f^v \leq c_v^r, \, & \forall  r \in \R \text{ and } v \in \U \label{eq:nodecapacity},\\
    x_f^v = 0,              & \forall f \in \F \text{ and } \forall v \notin \U.
    \end{cases}
    \end{equation}
    Also, we consider a limited budget $B$ and assume that the cost for making node $v$ a VNF-node is the same for all nodes, which is denoted by $b$.  Let $k=\lfloor B/b \rfloor$. Then, the budget constraint can be expressed as a cardinality constraint, i.e.,
    \begin{equation}
    \label{eq:budget}
    |\U| \leq k.
    \end{equation}

    As a service provider with a limited budget, a plausible objective is to introduce NFV at nodes that would result in the maximum fully processed traffic. Therefore, we consider the problem of multi-dimensional VNF-nodes placement and resource allocation (multi-VPRA) with the objective of maximizing the total fully processed traffic ($J_1(\U, \boldsymbol{X}(\U))$). The problem can be formulated as follows:
    \begin{equation}\tag{$P1$}
    \label{eq:mainProblem}
    \begin{aligned}
    & \underset{\U \subseteq \V, \boldsymbol{X}(\U)}{\text{maximize}} \quad  J_1(\U, \boldsymbol{X}(\U))\\
    & \text{subject to} \quad \eqref{eq:nodecapacity} \text{ and }\eqref{eq:budget}.
    \end{aligned}
    \end{equation}
    
    \section{Challenges of Multi-VPRA}
    In this section, we analyze the multi-VPRA problem and identify the main challenges posed by this problem. We first decompose the multi-VPRA problem into two subproblems: 1) placement, i.e., where to deploy VNF-nodes; 2) resource allocation of the VNF-nodes among flows. We will show the hardness of each subproblem and explain new challenges arising from the multi-dimensional generalization.
    
    \subsection{Decomposition}
    In this subsection, we present a decomposition of the multi-VPRA problem into placement and allocation subproblems. We start with the allocation subproblem because it will be used in the placement subproblem. For a given set of VNF-nodes $\U \subseteq \V$, let $J_2(\boldsymbol{X}(\U))$ denote the total amount of fully processed traffic under flow assignment $\boldsymbol{X}(\U)$. Then, the resource allocation subproblem for a given set of VNF-nodes $\U$ can be formulated as follows:
    \begin{equation}\tag{$P2$}
    \label{eq:allocation}
    \begin{aligned}
    & \underset{\boldsymbol{X}(\U): \eqref{eq:nodecapacity}~\text{is satisfied}} {\text{maximize}} \quad  {J_2}( \boldsymbol{X}(\U)). 
    \end{aligned}
    \end{equation}
    Let $J_3(\U) \triangleq \max_{\boldsymbol{X}(\U): \eqref{eq:nodecapacity}~\text{is satisfied}} J_2 (\boldsymbol{X}(\U))$ denote the placement value function, which is the optimal value of Problem \eqref{eq:allocation} for a given set of VNF-nodes $\U$. Then, the placement subproblem can be formulated as follows:
    \begin{equation} \tag{$P3$}
    \label{eq:placement}
    \begin{aligned}
    & \underset{\U \subseteq \V}{\text{maximize}} \quad   {J_3}( \U)\\
    & \text{subject to} \quad \eqref{eq:budget}. 
    \end{aligned}
    \end{equation}
    Note that in order to solve subproblem \eqref{eq:placement}, we need to solve subproblem \eqref{eq:allocation} first to find the optimal $\boldsymbol{X}(\U)$ for a given set of VNF-nodes  $\U$.
    \subsection{ Hardness}
    In \cite[Theorem 1]{sallam2021joint}, it is shown that for the basic-VPRA problem, both subproblems \eqref{eq:allocation} and \eqref{eq:placement} are NP-hard. The NP-hardness results can be easily extended to the multi-dimensional case considered here. Therefore, we simply state the hardness results in the following lemma without proofs.
    \begin{lemma}
        \label{lemma:nphardness}
        The resource allocation subproblem \eqref{eq:allocation} and the placement subproblem \eqref{eq:placement} are both NP-hard.    
    \end{lemma}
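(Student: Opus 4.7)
The plan is to establish NP-hardness of both subproblems by exhibiting the basic-VPRA problem as a polynomial-time restriction of the multi-VPRA problem, and then invoking the known NP-hardness result from \cite[Theorem 1]{sallam2019joint}. The essential observation is that basic-VPRA corresponds exactly to the multi-dimensional model instantiated with a single network function ($|\Phi|=1$), a single resource type ($|\R|=1$), and a unit resource requirement ($\beta_\phi^r = 1$). Under this restriction, the capacity constraint \eqref{eq:nodecapacity} collapses to $\sum_{f \in \F} x_f^v \leq c_v$ for every $v \in \U$, and the objective $J_1$ coincides with the basic-VPRA objective since every flow requires the single function and is counted as fully processed exactly when its assigned amount meets $\lambda_f$.

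For the allocation subproblem \eqref{eq:allocation}, I would take an arbitrary instance of the basic-VPRA allocation problem and construct, in time linear in the input size, the corresponding multi-VPRA instance with the parameters above and the same $\V$, $\U$, $\F$, $\boldsymbol{\lambda}$, and capacities. A feasible $\boldsymbol{X}(\U)$ for one instance is feasible for the other with identical value of $J_2$, so any polynomial-time optimum for \eqref{eq:allocation} would yield one for basic-VPRA allocation, contradicting its NP-hardness. For the placement subproblem \eqref{eq:placement}, the same reduction carries over: the placement value function $J_3(\U)$ in the constructed instance equals the basic-VPRA placement value, and the budget constraint \eqref{eq:budget} is unchanged. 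Hence any polynomial-time optimizer for \eqref{eq:placement} would also solve basic-VPRA placement.

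I do not expect any serious obstacle in this proof; it is essentially a verification that the restriction is syntactically a valid multi-VPRA instance and that the optimal values are preserved. The only care required is to note that both the feasibility region and the objective coincide exactly under the parameter restriction, so that the mapping between solutions is value-preserving in both directions, making the reduction a proper Karp reduction. Because the construction is polynomial (indeed linear) in the input size and the correspondence is exact, NP-hardness transfers directly to both \eqref{eq:allocation} and \eqref{eq:placement}.
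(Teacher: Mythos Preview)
Your proposal is correct and aligns precisely with the paper's own reasoning: the paper does not give a formal proof but simply observes that the NP-hardness established for the basic-VPRA problem in \cite[Theorem~1]{sallam2019joint} extends to the multi-dimensional case, and your restriction to $|\Phi|=1$, $|\R|=1$, $\beta_\phi^r=1$ is exactly that extension spelled out. One minor terminological point: since \eqref{eq:allocation} and \eqref{eq:placement} are optimization problems, speaking of a ``Karp reduction'' is slightly loose---strictly you are showing that the basic-VPRA instances form a subclass of multi-VPRA instances, so any polynomial-time exact algorithm for the latter solves the former---but this does not affect the validity of the argument.
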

    
    In addition, the placement subproblem of the basic-VPRA has been shown to be non-submodular \cite[Section IV. B]{sallam2021joint}. Similarly, the non-submodularity result can also be easily extended to the multi-dimensional case. In order to develop efficient algorithms for the basic-VPRA, \blue{the work of \cite{sallam2021joint} employs a simple relaxation method that allows any portion of a partially processed flow to be counted in the objective function. That is, the objective function in Eq.~\eqref{eq:objectiveJ_1} becomes $\sum_{f\in \F} \sum_{v \in \V_f \cap \U} x_f^v$. Such a simple relaxation allows one to prove submodularity of the placement subproblem and to design efficient algorithms for the basic-VPRA.} However, in the sequel, we will explain why the same framework and algorithms cannot be directly applied to solve the multi-VPRA problem we consider.
    
    The first challenge is that a similar relaxation of the basic-VPRA does not admit an efficient placement algorithm with performance guarantees for the multi-VPRA problem. The reason is that the objective function of the relaxed placement subproblem of the basic-VPRA problem can be shown to be equivalent to the maximum flow problem, which can be proved to be submodular. In contrast, the objective function of the relaxed placement subproblem of the multi-VPRA problem, to the best of our knowledge, can only be evaluated using Linear Programming, which does not provide us with enough insights that can be utilized to prove or disprove submodularity. The second challenge is that the resource allocation algorithms proposed for the basic-VPRA consider only a single resource and cannot be utilized to provide performance guarantees for the multi-VPRA problem, where multiple resources have to be considered during the resource allocation.
    
    In order to address these new challenges, we introduce a novel two-level relaxation method: (i) we allow partially processed flows as in \cite{sallam2021joint}, and (ii) we consider an approximate version of the resource allocation subproblem. This new relaxation method enables us to make a connection between the relaxed placement subproblem and the sequence submodular theory and design an efficient placement algorithm. For the resource allocation, we design two resource allocation algorithms both based on the primal-dual technique. Not only the proposed placement and resource allocation algorithms can properly handle the multi-dimensional setting, but they also guarantee a constant approximation ratio for the original non-relaxed multi-VPRA problem. 
    
    \section{Relaxed Multi-VPRA}
        \label{subsec:relaxedFormulation}
    In this section, we present the two-level relaxation of the multi-VPRA problem. In the first-level, we allow partially processed flows to be counted in the objective function, and in this case we use $R_1(\U, \boldsymbol{X}(\U))$  to denote the relaxed objective function (defined in Eq.  \eqref{eq:relaxedObjective}). In the second-level, instead of evaluating function $R_1(\U, \boldsymbol{X}(\U))$  for a set of nodes $\U$ together, we allow the algorithm to consider a specific ordering of nodes and evaluate the objective function on a node-by-node basis. Apparently, the first-level relaxation does not decrease the total traffic that can be assigned to a given set of VNF-nodes $\U$. In contrast, the second-level relaxation results in an approximate version of the resource allocation subproblem, and thus, there is a loss in the amount of processed traffic. However, we will prove that the loss is at most $1/2$ of the optimal. In addition, through simulation results, we will show that the loss due to the second-level relaxation is negligible. The purpose of this two-level relaxation is to draw a connection to the sequence submodular theory, which enables us to design efficient algorithms with provable performance guarantees.

    \subsection{First-level Relaxation}
    We first introduce the first-level relaxation, which allows partially processed flows to be counted. In this case, any fraction of  flow $f$ processed by VNF-nodes in $\V_f \cap \U$ will be counted in the total processed traffic. That is, the relaxed $J_1(\U, \boldsymbol{X}(\U))$ can be expressed as follows:
    \begin{equation}
    \label{eq:relaxedObjective}
    R_1(\U, \boldsymbol{X}(\U)) \triangleq \sum_{f\in \F} \sum_{v\in \V_f \cap \U} x_f^v.
    \end{equation}
    Apparently, the total processed traffic of flow $f$ cannot exceed $\lambda_f$, i.e.,  the flow rate constraint needs to be satisfied:
    \begin{equation}
    \sum_{v \in \U} x_f^v \leq \lambda_f, \quad \forall f \in \F. \label{eq:traffic2}
    \end{equation}
    Then, after the first-level relaxation, Problem \eqref{eq:mainProblem} becomes
    \begin{equation}\tag{$Q1$}
    \label{eq:relaxedProblem}
    \begin{aligned}
    \underset{\U \subseteq \V, \boldsymbol{X}(\U)}{\text{maximize}} \quad & R_1(\U, \boldsymbol{X}(\U))\\
    \text{subject to} \quad & \eqref{eq:nodecapacity}, \eqref{eq:budget},~\text{and}~ \eqref{eq:traffic2}. 
    \end{aligned}
    \end{equation}
   
    Next, we explain why we need the second-level relaxation for solving the multi-VPRA problem efficiently. Similar to the decomposition of Problem \eqref{eq:mainProblem}, we also decompose Problem \eqref{eq:relaxedProblem} into placement and allocation subproblems. For a given set of VNF-nodes $\U \subseteq \V$, let $\mathcal{X}^{\U}$ be the set of all flow assignment matrices $\boldsymbol{X}(\U)$ that satisfy the resources constraint \eqref{eq:nodecapacity} and the flow rate constraint \eqref{eq:traffic2}, and let ${R_2}(\boldsymbol{X}(\U))$ be the total processed traffic. Then, the resource allocation subproblem for a given set of VNF-nodes $\U$ can be formulated as
    \begin{equation}\tag{$Q2$}
    \label{eq:relaxedAllocation}
    \begin{aligned}
    & \underset{\boldsymbol{X}(\U) \in \mathcal{X}^{\U}} {\text{maximize}} \quad  {R_2}( \boldsymbol{X}(\U)).
    \end{aligned}
    \end{equation}
   
    Now, let ${R_3}(\U) \triangleq \max_{\boldsymbol{X}(\U) \in \mathcal{X}^{\U}} {R_2}(\boldsymbol{X}(\U))$ denote the optimal value of Problem \eqref{eq:relaxedAllocation} for a given set of VNF-nodes $\U$. The function $R_3(\U)$ is also called the placement value function, and the placement subproblem can be formulated as
    \begin{equation}\tag{$Q3$}
    \label{eq:relaxedPlacement}
    \begin{aligned}
    & \underset{\U \subseteq \V}{\text{maximize}} \quad   {R_3}( \U)\\
    & \text{subject to} \quad \eqref{eq:budget}. 
    \end{aligned}
    \end{equation}
    Unlike the relaxed placement subproblem of the basic-VPRA problem, which has been proven to be submodular, the submodularity of the relaxed placement subproblem \eqref{eq:relaxedPlacement} of the multi-VPRA remains unknown as explained earlier. Driven by this observation, in the next subsection we introduce another level of relaxation, which enables us to draw a connection to the sequence submodular theory. 
    
    \subsection{Second-level Relaxation}
    \label{subsec:second_relaxation}
In the second-level relaxation, instead of solving subproblem \eqref{eq:relaxedAllocation} to obtain the optimal solution $R_3(\U)$ for a set of nodes $\U$, we consider a specific ordering of nodes $\U$ and solve for each node one-by-one according to their order (which will be explained soon in Algorithm \ref{alg:fractional}). By doing this, we make a connection to the sequence submodular theory, which enables us to design an efficient placement algorithm with provable performance guarantee. First, we give some additional notations. Let $(v_1, v_2, \dots, v_m)$ be a sequence of nodes selected over $m$ steps, where $v_i \in \V$ is selected in the $i$-th step for $i = 1, \dots, m$. Let the set of all possible sequences of nodes be $\HH \triangleq \{ (v_1, v_2, \dots, v_m)~|~ m \in \mathbb{N} \cup \{0\},~ v_i \in \V\}$; when $m = 0$, we have an empty sequence. For two sequences $S_1 = (v_1, v_2, \dots, v_{m_1})$ and $S_2 = (u_1, u_2, \dots, u_{m_2})$ in $\HH$, we define a concatenation of $S_1$ and $S_2$ as
    \begin{equation*}
        S_1 \oplus S_2  \triangleq (v_1, v_2, \dots, v_{m_1}, u_1, u_2, \dots, u_{m_2}).
    \end{equation*}
We say that $S_1 \preceq S_2$ if $S_2$ can be rewritten as $S_1 \oplus S_3$ for some $S_3 \in \HH$. For sequence $S$, we use $\V(S)$ to denote the set of nodes in sequence $S$. By slightly abusing the notation, we use $|S|$ to denote the number of elements in sequence $S$. In addition, we often use $v$ to denote a singleton sequence $(v)$ when there is no confusion
    
In the following, unless stated otherwise, we only consider sequences with unique nodes. For sequence $S$, let $y_i(S)$ denote the total flow assigned to the $i$-th node and let $\boldsymbol{y}(S)$ denote a given feasible resource allocation vector for sequence $S$. By slightly abusing the notation, we use $\boldsymbol{X}(S)$ to denote the flow assignment matrix of sequence $S$. Consider any node $v_j$ in sequence $S$, with $j = 1, \dots, |S|$. Given a fixed resource allocation $\boldsymbol{y}(S)$, we define a fractional resource allocation of node $v_j$ as the solution of the following problem:
\begin{equation}
    \label{eq:nodeCapacityAllocation}
    \begin{aligned}
    & \underset{\boldsymbol{X}(S)}{\text{maximize}} \quad  \sum_{f\in \F}  x_f^{v_j}\\
    & \text{subject to} \\
    & \sum_{v \in \V_f \cap \V(S)} x_f^v \leq \lambda_f, && \forall f \in \F \text{ and } v \in \V_f, \\
    &  x_f^v=0, && \forall f \in \F \text{ and } v \notin \V_f, \\
    &  \sum_{\phi \in \Phi} \beta_\phi^r \sum_{f \in \F(\phi)} x_f^{v_i} \leq c_{v_i}^r, && \forall  r \in \R \text{ and }  i=1, \dots, |S|, \\
    &  \sum_{f \in \F} x_f^{v_i} =  y_i(S), && i = 1, \dots, |S| \text{ and } i \neq j.
    \end{aligned}
\end{equation}
    \noindent In Problem \eqref{eq:nodeCapacityAllocation}, we want to maximize the total traffic that can be assigned to node $v_j$ while satisfying the given resource allocation $\boldsymbol{y}(S)$ of all other nodes in $S$.
    
    Now, consider sequence $S$. The resource allocation of nodes in $S$ is presented in Algorithm \ref{alg:fractional}. Algorithm \ref{alg:fractional} starts by initializing the total traffic assigned to each node to zero (i.e., set $y_i(S) = 0,~ i = 1, \dots, |S|$), and then iterates over nodes in sequence $S$ according to their order. In iteration $i$, it computes the resource allocation of node $v_i$ by solving Problem \eqref{eq:nodeCapacityAllocation} given $\boldsymbol{y}(S)$, and then update $y_i(S)$ according to the obtained solution. We use $\hat{\boldsymbol{X}}(S)$ to denote the flow assignment matrix of sequence $S$ at the end of Algorithm \ref{alg:fractional}. Similarly, we use $\boldsymbol{\hat{y}}(S)$ to denote the resource allocation vector for sequence $S$ at the end of Algorithm \ref{alg:fractional}. For any sequence $S$, we define function $R_4(S)$ to be the total traffic assigned by Algorithm \ref{alg:fractional} for nodes in sequence $S$, i.e., 
    \begin{equation}
    \label{eq:R_4_S_nodes_based}
        R_4(S) \triangleq \sum_{i = 1}^{|S|} \hat{y}_i(S).
    \end{equation}
    Note that for sequences with repeated nodes, the value of $R_4(S)$ is the same as $R_4(\bar{S})$, where $\bar{S}$ is obtained from sequence $S$ by removing all the later appearances of the same node. Then, the relaxed version of Problem \eqref{eq:relaxedPlacement} becomes the following:
    \begin{equation}\tag{$Q4$}
    \label{eq:sequencePlacement}
    \begin{aligned}
    & \underset{S \in \HH}{\text{maximize}} \quad   {R_4}( S)\\
    & \text{subject to } |S| \leq k.
    \end{aligned}
    \end{equation}  

     \begin{algorithm}[t]
        \caption{Iterative resource allocation}
        \label{alg:fractional}
        \begin{algorithmic}[1]
            \Statex {Input: sequence of nodes $S$, set of flows $\F$, amount of resources $c_v^r$, flow rates $\lambda_f$, and flow demands $\beta_f^r$}
            \Statex Output: $\boldsymbol{\hat{y}}(S)$ and $\hat{\boldsymbol{X}}(S)$
            \State Initialize: $y_i(S) = 0,~ i = 1, \dots, |S|$
            \For {$i =1$ to $|S|$}
            \State $\boldsymbol{X}(S) \leftarrow$ solve Problem \eqref{eq:nodeCapacityAllocation} for node $v_i$ given $\boldsymbol{y}(S)$
            \State {Set $y_i(S) = \sum_{j = 1}^{F} x_{f_j}^{v_i}$}
            \EndFor
            \State {$\hat{\boldsymbol{X}}(S) = \boldsymbol{X}(S),~ \boldsymbol{\hat{y}}(S) = \boldsymbol{y}(S)$}
        \end{algorithmic}
    \end{algorithm}
   
    Next, we will show  that function $R_4(S)$ is a $1/2$-approximation of function $R_3(\U)$ for any sequence $S$ that is a permutation of nodes in set $\U$. This ensures that an optimal solution for Problem \eqref{eq:sequencePlacement} is a $1/2$-approximation solution of Problem \eqref{eq:relaxedPlacement}.  Moreover, in the next section, we will utilize the relaxed problem \eqref{eq:sequencePlacement} to design efficient algorithms for the multi-VPRA problem \eqref{eq:mainProblem}.
    
    First, we present Lemmas \ref{lemma:prefixequal} and \ref{lemma:v_from_lambda_S}, which will be used in the proof of the approximation ratio of Algorithm \ref{alg:fractional} and in establishing that function $R_4$ is sequence-submodular in the next section.  
\begin{lemma}
  \label{lemma:prefixequal}
  Consider $S_1$, $S_2$, and $S_3$ in $\HH$, such that $S_3 \preceq S_1$ and $S_3 \preceq S_2$. Applying Algorithm \ref{alg:fractional} to $S_1$ and $S_2$, respectively, yields $\hat{y}_i(S_1) = \hat{y}_i(S_2)$, for $i = 1, \dots, |S_3|$.
  \end{lemma}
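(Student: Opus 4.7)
The plan is to proceed by induction on $i$ from $1$ to $|S_3|$. The key structural observation about Algorithm \ref{alg:fractional} is that at iteration $j$, the components $y_l(S)$ for all $l > j$ are still at their initial value of zero (they are updated only in later iterations). Combined with the implicit non-negativity of the flow variables $x_f^v$, the equality constraints $\sum_{f} x_f^{v_l} = y_l(S) = 0$ for $l > j$ force $x_f^{v_l} = 0$ for every flow $f$ and every position $l > j$. Consequently, at iteration $j$ the optimization problem \eqref{eq:nodeCapacityAllocation} effectively reduces to an LP whose variables, objective, and remaining constraints (the per-flow rate constraint $\sum_{v \in \V_f \cap \V(S)} x_f^v \leq \lambda_f$, the resource constraints, and the fixed totals for positions $l < j$) depend only on the first $j$ nodes of the sequence and on the values $\hat{y}_l(S)$ for $l < j$.

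For the base case $i = 1$, the first node $v_1$ coincides in $S_1$ and $S_2$ because both start with the common prefix $S_3$, and there are no previously computed totals. The reduced LPs at iteration $1$ are therefore identical, so $\hat{y}_1(S_1) = \hat{y}_1(S_2)$. For the inductive step, suppose that $\hat{y}_l(S_1) = \hat{y}_l(S_2)$ for $l = 1, \dots, i-1$, where $i \leq |S_3|$. At iteration $i$, the node $v_i$ is the same in $S_1$ and $S_2$ (since $i \leq |S_3|$), the fixed totals $\hat{y}_l$ for $l < i$ agree by the induction hypothesis, and, by the structural observation above, the variables at positions $l > i$ are pinned to zero in both LPs. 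Hence the reduced LPs at iteration $i$ for $S_1$ and $S_2$ have identical feasible regions and identical objectives, so their optimal values coincide, yielding $\hat{y}_i(S_1) = \hat{y}_i(S_2)$ and completing the induction.

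The main subtlety—rather than a serious obstacle—is the zero-pinning argument: although the problem \eqref{eq:nodeCapacityAllocation} may have multiple optimal choices of how to split each fixed total $\hat{y}_l(S)$ among flows at a node $v_l$, what the induction needs is only equality of the \emph{optimal values} of the reduced LPs at iteration $i$, and the per-node total $\hat{y}_i(S) = \sum_f x_f^{v_i}$ is precisely the optimum of that LP. A secondary care point is invoking non-negativity of $x_f^v$ (implicit in its interpretation as a flow share) to deduce $x_f^{v_l} = 0$ from $\sum_f x_f^{v_l} = 0$ for positions $l > i$; without this the reduction would fail.
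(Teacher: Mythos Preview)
Your proof is correct and follows the same underlying idea as the paper's proof, which is a one-sentence observation that Algorithm~\ref{alg:fractional} processes the common prefix $S_3$ first and in the same order for both $S_1$ and $S_2$. Your induction and the explicit zero-pinning argument for positions $l > j$ make rigorous precisely what the paper leaves implicit.
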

  \begin{proof}
       See Section~\ref{proof:prefixequal}.
  \end{proof}
  
Before we present Lemma \ref{lemma:v_from_lambda_S}, we define some additional notations. By slightly abusing the notation, we use $\hat{\boldsymbol{X}}(\U)$ to denote an optimal flow assignment matrix of nodes $\U$ after solving Problem \eqref{eq:relaxedAllocation}. We use $\hat{x}_j(\U)$ to denote the total traffic assigned from flow $f_j$ to nodes $\U$, i.e.,
\begin{equation}
\label{eq:gamma_j_U}
    \hat{x}_j(\U) = \sum_{v \in \U} \hat{x}_{f_j}^v(\U),
\end{equation}
and we define $\hat{\boldsymbol{x}}(\U) \triangleq [\hat{x}_1(\U), \dots, \hat{x}_F(\U)]$. We can express $R_3(\U)$ in terms of $\hat{\boldsymbol{x}}(\U)$ as follows:
   \begin{equation}
       \label{eq:R_3_U_flows_based}
       R_3(\U) = \sum_{j =1}^{F} \hat{x}_j(\U).
   \end{equation}
Similarly, given $\hat{\boldsymbol{X}}(S)$, which is the flow assignment matrix in sequence $S$ at the end of Algorithm \ref{alg:fractional}, we use $\hat{x}_j(S)$ to denote the total traffic assigned from flow $f_j$ to nodes of sequence $S$, i.e.,
\begin{equation}
\label{eq:gamma_j_S}
    \hat{x}_j(S) = \sum_{i = 1}^{|S|} \hat{x}_{f_j}^{v_i}(S),
\end{equation}
and $\hat{\boldsymbol{x}}(S) \triangleq [\hat{x}_1(S), \dots, \hat{x}_F(S)]$. We can express $R_4(S)$ in terms of $\hat{\boldsymbol{x}}(S)$ as follows:
   \begin{equation}
       \label{eq:R_4_S_flows_based}
       R_4(S) = \sum_{j =1}^{F} \hat{x}_j(S).
   \end{equation}
   
We use $R_3(\U | \boldsymbol{x})$ to denote the value of function $R_3(\U)$ from a given flow rate vector $\boldsymbol{x}$; We use $R_4(S | \boldsymbol{x})$ in a similar manner. Note that $R_3(\U)$ is equivalent to $R_3(\U | \boldsymbol{\lambda})$; similarly, we have $R_4(S) = R_4(S | \boldsymbol{\lambda})$. In the sequel, we consider element-wise operations on vectors. For two flow rate vectors $\boldsymbol{x_1}$ and $\boldsymbol{x_2}$ such that $\boldsymbol{x_1} \leq \boldsymbol{x_2}$, we have
\begin{equation}
    \label{eq:compare_two_vectors_R_3}
    R_3(\U | \boldsymbol{x_1}) \leq R_3(\U | \boldsymbol{x_2}),
\end{equation}
where the inequality holds because any feasible solution to Problem \eqref{eq:relaxedAllocation} for node $\U$ given $\boldsymbol{x_1}$ is also a feasible solution to Problem \eqref{eq:relaxedAllocation} for node $\U$ given $\boldsymbol{x_2}$. In addition, for any node $u \in \V$, we have $R_4(u) = R_3(\{u\})$ because when applying Algorithm \ref{alg:fractional} to a singleton sequence, the equality constraints of Problem \eqref{eq:nodeCapacityAllocation} are irrelevant, which makes Problem \eqref{eq:nodeCapacityAllocation} equivalent to Problem \eqref{eq:relaxedAllocation}. As a result, we obtain the following:
\begin{equation}
    \label{eq:compare_two_vectors_R_4}
    R_4(u | \boldsymbol{x_1}) \leq R_4(u | \boldsymbol{x_2}).
\end{equation}
Note that Eq. \eqref{eq:compare_two_vectors_R_4} does not hold for non-singleton sequence in general.

Next, we present Lemma \ref{lemma:v_from_lambda_S}.
  \begin{lemma}
  \label{lemma:v_from_lambda_S}
      Consider $S_1$ and $S_2$ in $\HH$ such that $S_1 \preceq S_2$. For any node $u \notin \V(S_1)$, we have
      \begin{equation}
      \label{eq:v_from_lambda_S}
          R_4(u| (\boldsymbol{\lambda} - \hat{\boldsymbol{x}}(S_2))) \leq R_4(S_1 \oplus u| \boldsymbol{\lambda}) - R_4(S_1 | \boldsymbol{\lambda}).
      \end{equation}
  \end{lemma}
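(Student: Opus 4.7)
The plan is to reduce the inequality to a comparison of two optimization problems and then construct an explicit feasible solution of the harder one from the optimum of the easier one. The easier problem, on the left-hand side, is $R_4(u \mid \boldsymbol{\lambda} - \hat{\boldsymbol{x}}(S_2))$, which for the singleton sequence $(u)$ coincides with Problem \eqref{eq:relaxedAllocation} on node $u$ with reduced flow rates. The harder problem is the instance of Problem \eqref{eq:nodeCapacityAllocation} that governs the allocation of traffic to $u$ during the $(|S_1|+1)$-th iteration of Algorithm \ref{alg:fractional} applied to $S_1 \oplus u$.

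First, I would rewrite the right-hand side as the traffic assigned to $u$ in that iteration. By Lemma \ref{lemma:prefixequal} applied with $S_3 = S_1$ to the pair $S_1$ and $S_1 \oplus u$, the totals $\hat{y}_i$ coincide for $i = 1, \dots, |S_1|$. Hence $R_4(S_1 \oplus u \mid \boldsymbol{\lambda}) - R_4(S_1 \mid \boldsymbol{\lambda}) = \hat{y}_{|S_1|+1}(S_1 \oplus u)$, which by construction equals the optimal value of Problem \eqref{eq:nodeCapacityAllocation} for $v_j = u$ on $S = S_1 \oplus u$, with the earlier totals pinned at $\hat{y}_i(S_1)$.

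Next, let $(z_f)_{f \in \F}$ be an optimal solution of the LHS problem; it satisfies the capacity constraint at $u$, the conditions $z_f = 0$ when $u \notin \V_f$, and the reduced flow-rate bound $z_f \leq \lambda_f - \hat{x}_f(S_2)$. I would lift $(z_f)$ to a candidate assignment for Problem \eqref{eq:nodeCapacityAllocation} by setting $x_f^u = z_f$, $x_f^{v_i} = \hat{x}_f^{v_i}(S_2)$ for every $v_i \in \V(S_1)$, and $x_f^v = 0$ otherwise. The flow-rate constraint then reduces to $z_f + \sum_{v_i \in \V(S_1) \cap \V_f} \hat{x}_f^{v_i}(S_2) \leq z_f + \hat{x}_f(S_2) \leq \lambda_f$, using $\V(S_1) \subseteq \V(S_2)$; the node-capacity constraints at each $v_i \in \V(S_1)$ and at $u$, as well as the zero-pattern conditions, are inherited directly from the feasibility of $\hat{\boldsymbol{X}}(S_2)$ and of $(z_f)$.

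The main obstacle is the \emph{equality} constraint $\sum_f x_f^{v_i} = \hat{y}_i(S_1 \oplus u)$ for $i \leq |S_1|$, which cannot be slack. I would resolve it by invoking Lemma \ref{lemma:prefixequal} a second time, now with $S_3 = S_1$ on the pair $S_1 \oplus u$ and $S_2$, both of which are extensions of $S_1$. This yields $\hat{y}_i(S_1 \oplus u) = \hat{y}_i(S_2) = \sum_f \hat{x}_f^{v_i}(S_2)$ for $i \leq |S_1|$, so the lifted assignment matches the required totals exactly. The lifted solution is therefore feasible with objective value $\sum_f z_f = R_4(u \mid \boldsymbol{\lambda} - \hat{\boldsymbol{x}}(S_2))$, which lower bounds $\hat{y}_{|S_1|+1}(S_1 \oplus u)$ and establishes \eqref{eq:v_from_lambda_S}.
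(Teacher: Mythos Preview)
Your proposal is correct and follows essentially the same approach as the paper's proof: both rewrite the right-hand side via Lemma~\ref{lemma:prefixequal} as the optimal value $\hat{y}_{|S_1|+1}(S_1 \oplus u)$ of Problem~\eqref{eq:nodeCapacityAllocation} at $u$, then lift the LHS optimizer at $u$ together with the first $|S_1|$ columns of $\hat{\boldsymbol{X}}(S_2)$ into a feasible point, and invoke Lemma~\ref{lemma:prefixequal} again (with $S_3 = S_1$ on $S_1 \oplus u$ and $S_2$) to match the equality constraints. The constraint checks and the two uses of Lemma~\ref{lemma:prefixequal} line up one-for-one with the paper's argument.
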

   \begin{proof}
    See Section~\ref{proof:v_from_lambda_S}.
   \end{proof}
   
The approximation ratio of Algorithm \ref{alg:fractional} is stated in the following lemma.  
\begin{lemma}
   \label{lemma:setToSequence}
     For a given set of nodes $\U$, let $\mathcal{P} (\U)$ be the set of all permutations of nodes $\U$. For any $S \in \mathcal{P} (\U)$, we have $\frac{1}{2} R_3(\U) \leq R_4(S) \leq R_3(\U)$.
   \end{lemma}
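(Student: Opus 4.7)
My plan is to prove the two inequalities separately: the upper bound is immediate from feasibility, and the lower bound hinges on a decomposition of the optimal assignment combined with Lemma~\ref{lemma:v_from_lambda_S}.

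For the upper bound $R_4(S) \leq R_3(\U)$, I would argue that the matrix $\hat{\boldsymbol{X}}(S)$ returned by Algorithm~\ref{alg:fractional} satisfies the capacity constraint \eqref{eq:nodecapacity} at every node (enforced iteration-by-iteration in Problem \eqref{eq:nodeCapacityAllocation}) and the flow-rate constraint \eqref{eq:traffic2}, and it places no flow outside $\V(S) = \U$. Hence $\hat{\boldsymbol{X}}(S) \in \mathcal{X}^{\U}$, so $R_4(S) \leq R_3(\U)$.

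For the lower bound $R_4(S) \geq \frac{1}{2} R_3(\U)$, I plan to establish the chain
\begin{equation*}
R_3(\U) \;\leq\; R_4(S) + R_3(\U \,|\, \boldsymbol{\lambda} - \hat{\boldsymbol{x}}(S)) \;\leq\; R_4(S) + \sum_{i=1}^{|S|} R_4(v_i \,|\, \boldsymbol{\lambda} - \hat{\boldsymbol{x}}(S)) \;\leq\; 2 R_4(S).
\end{equation*}
The first inequality is the decomposition step. I take the optimal $\hat{\boldsymbol{X}}(\U)$ for Problem \eqref{eq:relaxedAllocation} on $\U$ and, for every flow $f_j$, scale its entries proportionally to form an ``excess'' sub-assignment whose per-flow total equals $\max\bigl(0,\, \hat{x}_j(\U) - \hat{x}_j(S)\bigr)$. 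Because the excess entries are entry-wise dominated by those of $\hat{\boldsymbol{X}}(\U)$, the excess satisfies \eqref{eq:nodecapacity} automatically, and the per-flow totals respect $\lambda_{f_j} - \hat{x}_j(S)$ because $\hat{x}_j(\U) \leq \lambda_{f_j}$. Hence the excess is feasible for $R_3(\U \,|\, \boldsymbol{\lambda} - \hat{\boldsymbol{x}}(S))$, and the remaining ``covered'' portion contributes at most $\sum_j \min(\hat{x}_j(\U), \hat{x}_j(S)) \leq R_4(S)$. The second inequality combines the standard subadditivity of $R_3$ over singletons (restricting any $\U$-feasible assignment to a single node is feasible for $R_3(\{u\} \,|\, \boldsymbol{x})$) with the identity $R_4(u \,|\, \boldsymbol{x}) = R_3(\{u\} \,|\, \boldsymbol{x})$ noted around \eqref{eq:compare_two_vectors_R_4}. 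The third inequality invokes Lemma~\ref{lemma:v_from_lambda_S} with $S_1 = (v_1, \dots, v_{i-1})$, $S_2 = S$, and $u = v_i$, giving $R_4(v_i \,|\, \boldsymbol{\lambda} - \hat{\boldsymbol{x}}(S)) \leq R_4(S_1 \oplus v_i) - R_4(S_1)$; Lemma~\ref{lemma:prefixequal} then identifies this difference with $\hat{y}_i(S)$, and summing over $i$ yields $R_4(S)$.

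The main obstacle will be the decomposition step, since the excess sub-assignment must be simultaneously feasible under the reduced per-flow demands $\boldsymbol{\lambda} - \hat{\boldsymbol{x}}(S)$ and under the unmodified node capacities $c_v^r$. Proportional rescaling of $\hat{\boldsymbol{X}}(\U)$ resolves both requirements at once: entry-wise domination gives the capacity bound for free, while $\hat{x}_j(\U) \leq \lambda_{f_j}$ (which holds because $\hat{\boldsymbol{X}}(\U)$ itself satisfies \eqref{eq:traffic2}) gives the per-flow bound. Once this point is settled, the remainder of the proof is a short chain combining Lemmas~\ref{lemma:prefixequal} and~\ref{lemma:v_from_lambda_S} with the elementary subadditivity of $R_3$.
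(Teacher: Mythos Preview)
Your proposal is correct and follows essentially the same route as the paper: the paper also splits $\hat{\boldsymbol{x}}(\U)$ into a ``covered'' part $\min(\hat{\boldsymbol{x}}(\U), \hat{\boldsymbol{x}}(S))$ bounded by $R_4(S)$ and an ``excess'' part $\max(\hat{\boldsymbol{x}}(\U)-\hat{\boldsymbol{x}}(S),0)$ shown to be feasible for $R_3(\U\mid\boldsymbol{\lambda}-\hat{\boldsymbol{x}}(S))$, and then bounds the latter via subadditivity over singletons, the identity $R_4(u\mid\cdot)=R_3(\{u\}\mid\cdot)$, and Lemma~\ref{lemma:v_from_lambda_S} exactly as you do. The only cosmetic difference is that the paper states the feasibility of the excess via the one-line observation that any vector dominated by $\hat{\boldsymbol{x}}(\U)$ is fully assignable to $\U$, rather than spelling out the proportional rescaling you propose---but these are the same argument.
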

   \begin{proof}
    See Section~\ref{proof:setToSequence}.
   \end{proof}

   \section{Proposed Algorithms}
    \label{sec:MultiVPRAALgorithm}
    In this section, we design two algorithms that approximately solve the multi-VPRA problem \eqref{eq:mainProblem}. The main idea is to apply the two-level relaxation introduced in the previous section on the original non-relaxed problem \eqref{eq:mainProblem}. By doing so, we can show that the objective function of the relaxed placement subproblem \eqref{eq:sequencePlacement} is forward-monotone, approximate backward-monotone, and sequence-submodular (to be defined in Section \ref{subsec:sequence_submodular}).
    In this case, the relaxed placement subproblem can be approximately solved using an efficient greedy algorithm. Moreover, the relaxed allocation subproblem becomes a Linear Program (LP), which can also be solved efficiently in polynomial time. However, the solution to the relaxed problem is for the case where any fraction of the processed flows is counted. In order to obtain a solution for the original multi-VPRA problem \eqref{eq:mainProblem}, where only the fully processed flows are counted, we propose two approximation algorithms based on the primal-dual technique. 
    
    We use SSG-PRA and SSG-NRA to denote the algorithms we develop by combining the Sequence Submodular Greedy placement with the Primal-dual-based Resource Allocation and the Node-based Resource Allocation, respectively. We describe the algorithms in a unified framework presented in Algorithm \ref{alg:proposed}. The difference is in the resource allocation subproblem (Line \ref{line:capacityAllocation}), where SSG-PRA algorithm uses a Primal-dual-based Resource Allocation (PRA) algorithm presented in Algorithm \ref{alg:PRA}, while SSG-NRA algorithm uses a Node-based Resource Allocation (NRA) algorithm presented in Algorithm \ref{alg:NRA}. We show that the SSG-PRA and SSG-NRA algorithms achieve an approximation ratio of $\frac{(e-1)(Z-1) } {4 e^2 Z (kR)^{1/(Z-1)}}$ and $\frac{(e-1)(Z-1)} {4e(Z-1+Z R^{1/(Z-1)})}$, respectively, where $Z$ (to be defined in Section \ref{sec:resource_allocation}) is the amount of resource compared to flow demand. 
    
    \begin{algorithm}[t]
        \caption{The SSG-PRA  and SSG-NRA algorithms}
        \label{alg:proposed}
        \begin{algorithmic}[1]
            \Statex {Input: set of nodes $\V$, set of flows $\F$, amount of resources, flow rates, flows demand, and budget $B$}
            \Statex Output: set of VNF-nodes $\U$ and resource allocation $\boldsymbol{X}(\U)$
            \State \textbf{Relaxed Problem}: relax function $J_1(\U, \boldsymbol{X}(\U))$ to become $R_1(\U, \boldsymbol{X}(\U))$ (first-level relaxation), and relax function $R_3(\U)$ to function $R_4(S)$ (second-level relaxation)
            \State \textbf{Placement Subproblem}: solve Problem \eqref{eq:sequencePlacement} using the sequence submodular greedy algorithm (Algorithm \ref{alg:SSG}) to obtain $S$; let $\U = \V(S)$
            \State \textbf{Resource Allocation}: use either the PRA algorithm (Algorithm \ref{alg:PRA}) or the NRA algorithm (Algorithm \ref{alg:NRA}) to obtain resource allocation $\boldsymbol{X}(\U)$ for nodes $\U$ \label{line:capacityAllocation}
        \end{algorithmic}
    \end{algorithm}

   \subsection{Preliminary Results}
   \label{subsec:sequence_submodular}
   
   \begin{algorithm}[t]
        \caption{Sequence Submodular Greedy (SSG) algorithm}
        \label{alg:SSG}
        \begin{algorithmic}[1]
            \Statex {Input: nodes $\V$, $k$}
            \Statex {Initialization: $S = ()$}
            \Statex Output: $S$
            \While {$|S| < k$}
            \State {$S = S \oplus \argmax_{v \in \V} (h(S \oplus v) - h(S))$}
            \EndWhile
        \end{algorithmic}
    \end{algorithm}
     In this subsection, we present results related to sequence submodular functions, which will be used to derive a placement algorithm for Problem \eqref{eq:sequencePlacement}. Note that the definitions and results presented in this subsection generalize to sequences with repeated nodes. We start with some definitions. A function from sequences to real numbers, $h : \HH \rightarrow \mathbb{R}$, is sequence-submodular if
        \begin{equation}
        \label{eq:sequence_submodular}
        \begin{aligned}
            &\forall S_1, S_2 \in \HH, \text{ such that }  S_1 \preceq S_2, ~ \forall v \in \V,  \\
            & h(S_1 \oplus v) - h(S_1) \geq h(S_2 \oplus v) - h(S_2).
            \end{aligned}
        \end{equation}
    Also, function $h$ is forward-monotone if
        \begin{equation}
            \label{eq:forward_monotone}
            \forall S_1, S_2 \in \HH, ~ h(S_1 \oplus S_2) \geq h(S_1),
        \end{equation}    
   and function $h$ is backward-monotone if
        \begin{equation}
            \label{eq:backward_monotone}
            \forall S_1, S_2 \in \HH, ~ h(S_1 \oplus S_2) \geq h(S_2).
        \end{equation}

   Consider the problem of selecting a sequence $S$ of length $k$ that maximizes function $h(S)$, i.e., 
   \begin{equation}\tag{$W$}
   \label{eq:sequence_selection}
        \max_{S \in \HH: |S| \leq k} h(S).
   \end{equation}
    Although Problem \eqref{eq:sequence_selection} is NP-hard, it has been shown in \cite{streeter2009online} that for function $h$ that is forward-monotone, backward-monotone, and sequence-submodular, the \emph{Sequence Submodular Greedy} (SSG) algorithm, presented in Algorithm \ref{alg:SSG}, achieves an approximation of $(1 - 1/e)$. Algorithm \ref{alg:SSG} starts with an empty sequence $S$ and greedily adds a node that has the largest incremental value to sequence $S$ until $|S| = k$. 
    
    However, some functions may only satisfy an approximate version of the backward-monotone property (e.g., function $R_4(S)$ as shown in Lemma \ref{lemma:submodularity}). Therefore, we generalize the backward-monotone property as follows: For $\alpha \in (0, 1]$, function $h$ is $\alpha$-backward-monotone if
        \begin{equation}
            \label{eq:alpha_backward_monotone}
            \forall S_1, S_2 \in \HH, ~ h(S_1 \oplus S_2) \geq \alpha h(S_2).
        \end{equation}
    Then, in the following theorem, we derive the approximation ratio of Algorithm \ref{alg:SSG} for function $h$ that satisfies the $\alpha$-backward-monotone property. This result will be used later to design an efficient algorithm for our VNF-node placement problem. In the sequel, for any Problem $(P)$, we use $\text{OPT}(P)$ to denote its optimal value.
    Without loss of generality, we assume that the value of an empty sequence is zero. 
   \begin{theorem}
   \label{theorem:SSG}
    Suppose that sequence function $h$ is forward-monotone, $\alpha$-backward-monotone, and sequence-submodular. Then, Algorithm \ref{alg:SSG} achieves an approximation ratio of $\alpha (1-1/e)$ for Problem \eqref{eq:sequence_selection}, i.e., $h(S) \geq \alpha(1-1/e) \text{OPT}\eqref{eq:sequence_selection}$.
   \end{theorem}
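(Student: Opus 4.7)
The plan is to adapt the classical proof of the $(1-1/e)$ guarantee for the greedy algorithm on monotone submodular functions, modifying only the step that relies on the backward-monotone property to account for the $\alpha$ factor. Let $S^* = (u_1, u_2, \dots, u_k)$ denote an optimal solution to Problem \eqref{eq:sequence_selection} so that $h(S^*) = \mathrm{OPT}\eqref{eq:sequence_selection}$, and let $S_i$ denote the sequence produced by Algorithm \ref{alg:SSG} after $i$ iterations, with $S_0 = ()$ and $S_k$ the final output. Write $S^*_j = (u_1, \dots, u_j)$ for the $j$-th prefix of $S^*$ (with $S^*_0 = ()$).

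The first step is a telescoping argument bounding the ``marginal gap'' at iteration $i$. I would write
\begin{equation*}
h(S_i \oplus S^*) - h(S_i) = \sum_{j=1}^{k} \bigl[h(S_i \oplus S^*_{j-1} \oplus u_j) - h(S_i \oplus S^*_{j-1})\bigr].
\end{equation*}
Applying sequence submodularity \eqref{eq:sequence_submodular} with the prefixes $S_i \preceq S_i \oplus S^*_{j-1}$, each term in the sum is upper bounded by $h(S_i \oplus u_j) - h(S_i)$. Since the SSG algorithm chooses $v_{i+1}$ to maximize $h(S_i \oplus v) - h(S_i)$ over $v \in \V$, each such marginal is at most $h(S_{i+1}) - h(S_i)$. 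Combining yields
\begin{equation*}
h(S_i \oplus S^*) - h(S_i) \leq k \bigl[h(S_{i+1}) - h(S_i)\bigr].
\end{equation*}
The second step is where $\alpha$-backward-monotonicity \eqref{eq:alpha_backward_monotone} enters: applied to $S_1 = S_i$ and $S_2 = S^*$ it gives $h(S_i \oplus S^*) \geq \alpha h(S^*) = \alpha\, \mathrm{OPT}\eqref{eq:sequence_selection}$. Substituting into the previous display produces
\begin{equation*}
\alpha\, \mathrm{OPT}\eqref{eq:sequence_selection} - h(S_i) \leq k\bigl[h(S_{i+1}) - h(S_i)\bigr].
\end{equation*}

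The final step is the standard recurrence solve. Letting $\delta_i \triangleq \alpha\, \mathrm{OPT}\eqref{eq:sequence_selection} - h(S_i)$, the inequality rearranges to $\delta_{i+1} \leq (1 - 1/k)\, \delta_i$, so by induction $\delta_k \leq (1-1/k)^k \delta_0 \leq \delta_0/e$. Since $h$ of the empty sequence is zero, $\delta_0 = \alpha\, \mathrm{OPT}\eqref{eq:sequence_selection}$, giving $h(S_k) \geq \alpha(1 - 1/e)\, \mathrm{OPT}\eqref{eq:sequence_selection}$. Forward-monotonicity \eqref{eq:forward_monotone} is needed implicitly to ensure that $h(S_{i+1}) \geq h(S_i)$ so that the sequence $\{\delta_i\}$ is non-increasing and the recurrence makes sense; it also guarantees that every marginal gain considered by the greedy rule is non-negative.

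The only nontrivial piece is the first step: one must verify that sequence submodularity (which is stated for appending a single element to comparable prefixes) chains correctly through a telescoped sum whose intermediate points $S_i \oplus S^*_{j-1}$ are genuinely prefix-comparable to $S_i$. This is immediate from the definition of $\preceq$, but it is the place where one should be explicit, because this is the step that breaks if one tries to replace sequence submodularity with a weaker set-based notion. Everything else is algebra and the recurrence solve. No new machinery beyond \eqref{eq:sequence_submodular}, \eqref{eq:forward_monotone}, and \eqref{eq:alpha_backward_monotone} is required.
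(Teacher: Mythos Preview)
Your proposal is correct and follows essentially the same route as the paper's proof. The paper packages your telescoping-plus-submodularity step as a cited lemma (Lemma~\ref{lemma:node_marginal_value} from \cite{alaei2010maximizing}), stating directly that the greedy marginal $h(v_i \mid S^{i-1})$ dominates $\tfrac{1}{|S'|} h(S' \mid S^{i-1})$ for any $S'$, and then applies $\alpha$-backward-monotonicity and the same recurrence expansion you give; the only minor point you leave implicit is that forward-monotonicity also justifies assuming $|S^*| = k$.
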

    \begin{proof}
    The proof follows a similar line of analysis as in \cite{alaei2010maximizing}. See Section~\ref{proof:SSG}.
    \end{proof}
    
   \subsection{Placement Algorithm}
    \label{subsec:placement}
      In this subsection, we prove that function $R_4(S)$ is forward-monotone, $\frac{1}{2}$-backward-monotone, and sequence-submodular. Then, using the property of sequence submodularity, we employ the SSG  algorithm (Algorithm \ref{alg:SSG}) for solving the placement subproblem. We start with the following lemma.
     
    \begin{lemma}
        \label{lemma:submodularity}
        The function $R_4(S)$ is forward-monotone, $\frac{1}{2}$-backward-monotone, and sequence-submodular.
    \end{lemma}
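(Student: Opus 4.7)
The plan is to establish the three properties separately: forward-monotonicity and $\tfrac{1}{2}$-backward-monotonicity follow quickly from Lemmas \ref{lemma:prefixequal} and \ref{lemma:setToSequence}, while sequence-submodularity is the main obstacle and requires a careful feasibility-transfer argument.

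For forward-monotonicity, I would write $R_4(S_1 \oplus S_2) = \sum_{i=1}^{|S_1|+|S_2|} \hat{y}_i(S_1 \oplus S_2)$ and split the sum at position $|S_1|$. Lemma \ref{lemma:prefixequal} applied with $S_3 = S_1$ to the pair $(S_1,\, S_1 \oplus S_2)$ identifies the prefix sum with $R_4(S_1)$, and the tail is non-negative because each $\hat{y}_i$ is a sum of non-negative flow variables. For $\tfrac{1}{2}$-backward-monotonicity, I would first reduce to a unique-node sequence by deduplicating $S_1 \oplus S_2$ (using the stated invariance of $R_4$ under removing later appearances), obtaining a permutation $\bar S$ of $\V(S_1) \cup \V(S_2)$. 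Lemma \ref{lemma:setToSequence} then yields $R_4(\bar S) \geq \tfrac{1}{2} R_3(\V(S_1)\cup\V(S_2))$, and set-monotonicity of $R_3$ (any allocation on a smaller node set extends by zeros) gives $R_3(\V(S_1)\cup\V(S_2)) \geq R_3(\V(S_2))$. Applying Lemma \ref{lemma:setToSequence} once more in the reverse direction to a deduplicated permutation of $\V(S_2)$ gives $R_3(\V(S_2)) \geq R_4(S_2)$, closing the chain.

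Sequence-submodularity is where the real work lies. Fix $S_1 \preceq S_2$ and $v \in \V$. The case $v \in \V(S_1)$ is trivial because both marginals vanish by duplicate invariance, and the case $v \in \V(S_2) \setminus \V(S_1)$ reduces to forward-monotonicity since the right-hand marginal is $0$. In the remaining case $v \notin \V(S_2)$, Lemma \ref{lemma:prefixequal} cancels the prefix contributions and reduces the target inequality to $\hat{y}_{|S_1|+1}(S_1 \oplus v) \geq \hat{y}_{|S_2|+1}(S_2 \oplus v)$. I would prove this by a feasibility-transfer argument: take the final allocation $\hat{\boldsymbol{X}}(S_2 \oplus v)$ produced by Algorithm \ref{alg:fractional}, restrict its columns to the nodes of $S_1 \oplus v$ (discarding entries on $\V(S_2) \setminus \V(S_1)$), and check that the resulting matrix $\tilde{\boldsymbol{X}}$ is feasible for Problem \eqref{eq:nodeCapacityAllocation} in the iteration that defines $\hat{y}_{|S_1|+1}(S_1 \oplus v)$. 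Flow-rate and node-capacity constraints only loosen when columns are dropped; non-negativity and support constraints are inherited; and the equality constraints $\sum_f x_f^{v_i} = \hat{y}_i(S_1)$ for $i \leq |S_1|$ are the one nontrivial check, handled by Lemma \ref{lemma:prefixequal}, which yields $\hat{y}_i(S_1) = \hat{y}_i(S_2 \oplus v)$ on the common prefix. Since $\sum_f \tilde{x}_f^v = \hat{y}_{|S_2|+1}(S_2 \oplus v)$ by construction, optimality of Problem \eqref{eq:nodeCapacityAllocation} for $S_1 \oplus v$ yields the desired inequality. The main difficulty is precisely this equality-constraint check: at first glance one might fear that an allocation optimized for the longer sequence over-commits the prefix totals, and Lemma \ref{lemma:prefixequal} is exactly what rescues the argument.
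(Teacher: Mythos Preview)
Your proposal is correct. The forward-monotone and $\tfrac{1}{2}$-backward-monotone arguments are essentially identical to the paper's: the paper also invokes Lemma~\ref{lemma:prefixequal} for the prefix identification and chains Lemma~\ref{lemma:setToSequence} with set-monotonicity of $R_3$ exactly as you do.

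For sequence-submodularity in the case $v \notin \V(S_2)$, your route differs from the paper's. The paper rewrites the marginal $R_4(S_2 \oplus v) - R_4(S_2)$ as a single-node value $R_4(v \mid \hat{\boldsymbol{x}}(S_2 \oplus v) - \bar{\boldsymbol{x}}(S_2))$, relaxes it to $R_4(v \mid \boldsymbol{\lambda} - \bar{\boldsymbol{x}}(S_2))$ via Eq.~\eqref{eq:compare_two_vectors_R_4}, and then invokes Lemma~\ref{lemma:v_from_lambda_S} to bound this by $R_4(S_1 \oplus v) - R_4(S_1)$. You instead do a direct column-restriction: drop the columns of $\hat{\boldsymbol{X}}(S_2 \oplus v)$ indexed by $\V(S_2)\setminus\V(S_1)$ and verify that the remaining matrix is feasible for Problem~\eqref{eq:nodeCapacityAllocation} at the iteration defining $\hat{y}_{|S_1|+1}(S_1 \oplus v)$. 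Your equality-constraint check is sound because the final $\hat{\boldsymbol{X}}(S_2 \oplus v)$ (being the solution of the last LP in Algorithm~\ref{alg:fractional}) satisfies $\sum_f \hat{x}_f^{v_i}(S_2 \oplus v) = \hat{y}_i(S_2 \oplus v)$ for all $i \leq |S_2|$, and Lemma~\ref{lemma:prefixequal} identifies these with $\hat{y}_i(S_1 \oplus v)$ on the common prefix. Your argument is more self-contained (it does not rely on Lemma~\ref{lemma:v_from_lambda_S}), while the paper's is more modular: Lemma~\ref{lemma:v_from_lambda_S} is proved once and reused both here and inside the proof of Lemma~\ref{lemma:setToSequence}.
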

    \begin{proof}
    See Section~\ref{proof:submodularity}.
    \end{proof}
    
    We would like to point out that function $R_4$ is $\frac{1}{2}$-backward-monotone and that the bound of  
    $\frac{1}{2}$ is tight. We prove it through constructing a  problem instance in Section~\ref{example:lower_bound}. 

    Because of this useful property of sequence submodularity, Problem \eqref{eq:sequencePlacement} can be approximately solved using the SSG algorithm (Algorithm \ref{alg:SSG}). In the SSG algorithm, we start with an empty solution of VNF-nodes in $S$; in each iteration, we add a node that has the maximum marginal contribution to $S$, i.e., a node that leads to the largest increase in the value of the objective function $R_4(S)$. We repeat the above procedure until $k$ VNF-nodes have been selected. Note that if a node has been selected in a previous iteration, then its marginal contribution in any subsequent iteration will be zero. If at any iteration the marginal contribution of all nodes is zero, then we select a node that has not been selected before. In this way, we ensure that the selected sequence $S$ has no repeated nodes. To solve Problem \eqref{eq:relaxedPlacement}, we need a set of nodes rather than a sequence, but in order to use Algorithm \ref{alg:SSG} and take advantage of its approximation ratio, we select a sequence of unique nodes, which can be converted to a set. We state the performance of the SSG algorithm for Problem \eqref{eq:sequencePlacement} in the following theorem.

     \begin{theorem}
        \label{theorem:multiVPRAPlacement}
       The SSG algorithm achieves an approximation ratio of $\frac{1}{2}(1-1/e)$ for Problem \eqref{eq:sequencePlacement}, i.e., $R_4(S) \geq \frac{1}{2}(1-1/e)\text{OPT}\eqref{eq:sequencePlacement}$.
    \end{theorem}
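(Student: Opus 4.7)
The plan is to obtain Theorem~\ref{theorem:multiVPRAPlacement} as a direct corollary of the general result on sequence submodular maximization, Theorem~\ref{theorem:SSG}, instantiated at the objective of our relaxed placement subproblem. Concretely, I would take $h = R_4$ and view Problem~\eqref{eq:sequencePlacement} as an instance of the abstract sequence-selection Problem~\eqref{eq:sequence_selection} with the same cardinality bound $|S|\le k$ and the same ground set $\V$.

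The first step is to verify the three hypotheses of Theorem~\ref{theorem:SSG} for $h=R_4$. All three are already established in Lemma~\ref{lemma:submodularity}: $R_4$ is forward-monotone, it is $\tfrac12$-backward-monotone (which is the $\alpha$-backward-monotone condition \eqref{eq:alpha_backward_monotone} with $\alpha=\tfrac12$), and it is sequence-submodular in the sense of \eqref{eq:sequence_submodular}. Since the SSG procedure in Algorithm~\ref{alg:SSG} only inspects marginal increments $h(S\oplus v)-h(S)$ evaluated by Algorithm~\ref{alg:fractional}, no additional structural assumption on $R_4$ is needed to run the algorithm on this instance.

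The second step is to invoke Theorem~\ref{theorem:SSG} with $\alpha = \tfrac12$. That theorem yields
\begin{equation*}
    R_4(S) \;\ge\; \tfrac{1}{2}\bigl(1-1/e\bigr)\,\mathrm{OPT}\eqref{eq:sequencePlacement},
\end{equation*}
which is exactly the claim. I would also briefly note that our convention that the value of an empty sequence is zero is satisfied, since Algorithm~\ref{alg:fractional} returns the zero assignment on an empty input and hence $R_4(())=0$ from \eqref{eq:R_4_S_nodes_based}.

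Because the non-trivial work has already been front-loaded into Lemma~\ref{lemma:submodularity} (establishing the three structural properties of $R_4$, in particular the tight $\tfrac12$-backward-monotone bound) and into Theorem~\ref{theorem:SSG} (pushing the classical greedy guarantee through approximate backward-monotonicity), there is no genuine obstacle at this step: the theorem is essentially a one-line corollary. The only care required is bookkeeping — making sure that the sequences produced by the SSG algorithm contain no repeated nodes so that $\V(S)$ is a valid $k$-element VNF-node set for downstream use, which is guaranteed by the tie-breaking rule described in the paragraph preceding the theorem.
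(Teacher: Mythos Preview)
Your proposal is correct and matches the paper's own proof essentially verbatim: the paper simply states that the result follows trivially by applying Theorem~\ref{theorem:SSG} together with Lemma~\ref{lemma:submodularity}, which is exactly the instantiation $h=R_4$, $\alpha=\tfrac12$ you describe. The additional bookkeeping remarks you include (empty-sequence value, no repeated nodes) are fine but not needed beyond what the paper already assumes.
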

    \begin{proof}
        \blue{The result follows immediately from Theorem~\ref{theorem:SSG} and Lemma~\ref{lemma:submodularity} by plugging $\alpha = \frac{1}{2}$ into Theorem~\ref{theorem:SSG}.}    
    \end{proof}

    \subsection{Resource Allocation Algorithms}
    \label{sec:resource_allocation}
    While solving the placement subproblem \eqref{eq:sequencePlacement}, the resource allocation is achieved by using Algorithm \ref{alg:fractional}, which allows partially processed flows to be counted. However, Problem \eqref{eq:mainProblem} requires flows to be fully processed. Therefore, we present two resource allocation algorithms that modify the resource allocation of the selected VNF-nodes while guaranteeing certain approximation ratios. Both algorithms are based on the primal-dual technique \cite{briest2011approximation}. We describe each of the algorithms in the following.
    
    For the selected sequence $S$ of VNF-nodes, let $\U = \V(S)$. We first provide a formulation of the optimal fractional resource allocation of VNF-nodes $\U$, which allows partially processed flows. Based on the dual of this formulation, we will present the two resource allocation algorithms. We define $\delta_f^r \triangleq \sum_{\phi \in  \Phi_f} \beta_\phi^r$ to be the total amount of resource $r$ needed to process a unit of flow $f$ by a set of network functions $\Phi_f$. We define the maximum demand across all flows as $d_{\text{max}} \triangleq \max_{f \in \F, r\in \R} \delta_f^r \lambda_f$. Then, for each flow $f$ we define the normalized total demand of resource $r$ as $d_f^r \triangleq \delta_f^r \lambda_f /d_{\text{max}}$. In addition,  for each VNF-node $v$, we define the normalized total amount of resource $r$ as $\bar{c}_v^r \triangleq  c_v^r/d_{\text{max}}$. Finally, we define $Z \triangleq \min_{v \in S, r\in \R} \bar{c}_v^r$ as a measure of the available resource compared to flow demand (we call it resource stretch). We use $a_f^v$ to denote the fraction of flow $f$ that is assigned to VNF-node $v$. The optimal fractional resource allocation of VNF-nodes $\U$ can be formulated as:
    \begin{equation}
    \begin{aligned}
            &\max_{a_f^v} \sum_{f \in \F} \lambda_f \sum_{v \in \U \cap \V_f} a_f^v\\
            & \text{subject to} \\
            & \sum_{f \in \F} d_f^r a_f^v \leq \bar{c}_v^r, ~ &&\forall r \in \R ~ \text{ and } v \in \U, \\
            & \sum_{v \in S_f} a_f^v \leq 1, ~ &&\forall f\in \F, \\
            & a_f^v \geq 0, ~ &&\forall f \in \F \text{ and } v \in \U.
    \end{aligned}
         \label{eq:primal}
    \end{equation}
        \noindent The corresponding dual linear program is
        \begin{equation}
    \begin{aligned}
            &\min_{b_v^r, z_f} \sum_{v \in \U} \sum_{r \in \R} \bar{c}_v^r b_v^r + \sum_{f \in \U \cap \V_f} z_f\\
            & \text{subject to} \\
            & z_f + \sum_{r \in \R} d_f^r  b_v^r \geq \lambda_f, ~ \forall f \in \F ~ \text{ and } v \in \U \cap \V_f, \\
            & b_v^r, z_f \geq 0, ~~~~~~~~~~~~~~\forall v \in \V,~ r \in \R ~ \text{ and } f \in \F.
    \end{aligned}
     \label{eq:dual}
    \end{equation}
    
    \subsubsection{Primal-dual-based Resource Allocation (PRA)}
     For the VNF-nodes $\U$ that are selected by the SSG algorithm, we modify their resource allocation to guarantee fully processed flows.  We propose a primal-dual-based resource allocation algorithm, which is adapted from a multi-commodity routing algorithm proposed in \cite{briest2011approximation} and based on the dual formulation \eqref{eq:dual}. The main idea is to view the dual variable $b_v^r$ as a price of resource $r$ at VNF-node $v$. The algorithm chooses a VNF-node $v_f$ with the minimum total cost for each flow. Then, it picks a flow that maximizes the relative value of $\lambda_f$ compared to the weighted cost and assigns that flow to the associated node. Then the price of each resource of the selected VNF-node is updated accordingly. The update of the price $b_v^r$ is designed in a way such that if the limited resource is violated, then the stopping condition is satisfied from the previous iteration. The algorithm stops when all flows are assigned or when $\sum_{v \in \U} \sum_{r \in \R} \bar{c}_v^r b_v^r \geq e^{Z-1}R|\U|$.  The update of price $b_v^r$ is also implemented in a way such that it maintains the value of the dual problem within a range of the value of the primal problem. Then, by weak duality, this establishes the approximation ratio of the primal-dual algorithm.  

We use $\pi_{\text{PRA}}^{\U}$ to denote the total traffic assigned to VNF-nodes $\U$ by the PRA algorithm. The approximation ratio of the PRA algorithm with respect to function $R_4(S)$ is stated in the following Lemma.
\begin{lemma}
    \label{lemma:PRAapproximation}
    The approximation ratio of the  PRA  algorithm is  $\pi_\text{PRA}^{\U} \geq \frac{Z-1}{e Z (kR)^{1/(Z-1)}} R_4(S)$.
\end{lemma}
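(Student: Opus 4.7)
My plan is a primal-dual analysis coupled with the LP duality of the pair \eqref{eq:primal}--\eqref{eq:dual}, reducing the bound against $R_4(S)$ to a bound against the LP optimum.

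First, I would observe that under the substitution $x_f^v = \lambda_f a_f^v$, the primal LP \eqref{eq:primal} coincides with the relaxed allocation problem \eqref{eq:relaxedAllocation}, so $\text{OPT}\eqref{eq:primal} = R_3(\U)$. By Lemma \ref{lemma:setToSequence}, $R_4(S) \leq R_3(\U)$, and by LP strong duality $R_3(\U) = \text{OPT}\eqref{eq:dual}$. Hence it suffices to lower-bound $\pi_{\text{PRA}}^{\U}$ by $\frac{Z-1}{eZ(kR)^{1/(Z-1)}}\,\text{OPT}\eqref{eq:dual}$, which, via weak duality applied to the final dual iterate produced by the algorithm, reduces to controlling the primal-dual coupling during the execution of PRA.

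Next I would follow the template of Briest et al.~\cite{briest2011approximation}. Initialize the dual prices $b_v^r$ to a common small value $b_0$ so that the initial dual objective $\sum_{v,r}\bar c_v^r b_v^r$ is $O(R|\U|\,b_0)$, and set $z_f$ to provide the residual slack needed for feasibility of \eqref{eq:dual}. At each iteration the algorithm assigns a flow $f^\ast$ to its cheapest node $v_{f^\ast}$ (the minimum-cost choice in the dual constraint) and multiplicatively updates $b_{v_{f^\ast}}^r$ by a factor of the form $(1 + \epsilon\, d_{f^\ast}^r/\bar c_{v_{f^\ast}}^r)$. The key invariant I would establish is that the incremental dual growth $\Delta D$ is at most $(1+\epsilon)\Delta P$ per unit of primal gain $\Delta P = \lambda_{f^\ast}$, a consequence of the greedy node/flow rule.

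The remaining step is to combine the stopping condition with the multiplicative growth of the prices. When the algorithm halts with $\sum_{v,r}\bar c_v^r b_v^r \geq e^{Z-1}R|\U|$, the exponential growth of the prices from $b_0$ up to this threshold requires approximately $(Z-1)\bar c_v^r / \epsilon$ assigned units per resource, and aggregating over all $R|\U|\leq kR$ resources yields a lower bound on the total traffic $\pi_{\text{PRA}}^{\U}$. Optimizing the choice of $\epsilon$ (and correspondingly $b_0$) to balance the coupling factor $(1+\epsilon)$ against the iteration count yields the claimed ratio. If instead the algorithm halts because all flows are fully assigned, then $\pi_{\text{PRA}}^{\U} = \sum_f \lambda_f \geq R_4(S)$ and the bound holds trivially.

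The main obstacle is the careful tuning of $\epsilon$ and the bookkeeping of the multiplicative updates: the exponential threshold $e^{Z-1}R|\U|$ and the loss factor $(kR)^{1/(Z-1)}$ are two faces of the same trade-off. A larger $\epsilon$ terminates the algorithm sooner but worsens the dual-primal coupling, whereas a smaller $\epsilon$ tightens the coupling at the cost of more iterations. The technical crux is to show that this trade-off can be balanced to yield exactly $\frac{Z-1}{eZ(kR)^{1/(Z-1)}}$, which is where the bulk of the analytical work lies, paralleling the multi-commodity routing argument of \cite{briest2011approximation} but specialized to our per-resource stretch parameter $Z$.
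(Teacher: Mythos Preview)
Your proposal is correct and follows essentially the same approach as the paper: reduce the comparison to the LP optimum $\text{OPT}\eqref{eq:primal}$ (since $R_4(S) \leq R_3(\U) = \text{OPT}\eqref{eq:primal}$) and then invoke the primal-dual guarantee of Briest et al.~\cite{briest2011approximation}. The only difference is that the paper treats the latter step as a black-box citation of \cite[Lemma~5.7, Theorem~5.1]{briest2011approximation}, whereas you sketch its derivation; note that the tunable $\epsilon$ in your template is already hard-coded in Algorithm~\ref{alg:PRA}'s update rule $b_{v_{f'}}^r \leftarrow b_{v_{f'}}^r(e^{Z-1}R|\U|)^{d_{f'}^r/(\bar c_{v_{f'}}^r-1)}$, so no further optimization over $\epsilon$ is needed when analyzing PRA as stated.
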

\begin{proof}
    Recall that we use $\text{OPT}\eqref{eq:primal}$ to denote the optimal value of Problem \eqref{eq:primal}. The proof follows from the following:
    \begin{equation}
            \begin{aligned}
            \pi_\text{PRA}^{\U} &\stackrel{\text{(a)}}{\geq} \frac{Z-1}{e Z (|\U| R)^{1/(Z-1)}} \text{OPT}\eqref{eq:primal}\\
            &\stackrel{\text{(b)}}{\geq} \frac{Z-1}{e Z (k R)^{1/(Z-1)}} \text{OPT}\eqref{eq:primal} \\
            &\stackrel{\text{(c)}}{\geq} \frac{Z-1}{e Z (k R)^{1/(Z-1)}} R_4(S).
            \end{aligned}
    \end{equation} 
    The primal-dual algorithm has been shown to achieve the approximation ratio in (a) with respect to any fractional solution \cite[Lemma 5.7, Theorem 5.1]{briest2011approximation}; (b) follows because $k \geq |\U|$; (c) follows from the fact that the value $R_4(S)$ is upper bounded by $\text{OPT}\eqref{eq:primal}$.
\end{proof}

When $Z$ goes to infinity, then the algorithm has an approximation ratio of $1/e$. The time complexity of the PRA algorithm is $O(|\U| F^2)$.

     \begin{algorithm}[t]
        \caption{Primal-dual-based resource allocation (PRA)}
        \label{alg:PRA}
        \begin{algorithmic}[1]
            \State {Input: VNF-nodes $\U$, set of flows $\F$, normalized resources $\bar{c}_v^r$, flow rates $\lambda_f$, normalized flow demands $d_f^r$.}
            \State Initialization: $b_v^r = 1/\bar{c}_v^r, ~ \forall r \in \R, v \in \U,~ x_f^v =0,  \forall f \in \F \text{ and } v \in \U$ 
            \State Output: $\boldsymbol{X}(\U)$
           \Repeat
           \For {$f \in \F$} 
           \State $v_f = \argmin_{v \in \U \cap \V_f} \{\sum_{r \in R} b_v^r\};$
           \EndFor
           \vspace{1mm}
           \State $f^\prime = \argmax_{f \in \F}\{ \frac{\lambda_f}{\sum_{r \in \R} d_f^r b_{v_f}^r}\};$
           \vspace{1.5mm}
           \State $x_{f^\prime}^{v_{f^\prime}} = \lambda_{f^\prime};$
           \vspace{1.5mm}
           \State $\F = \F \setminus \{f^\prime\};$
           \vspace{1.5mm}
           \State update $b_{v_{f^\prime}}^r = b_{v_{f^\prime}}^r(e^{Z-1}R|\U|)^{d_{f^\prime}^r / (\bar{c}_{v_{f^\prime}}^r -1 )}, ~ \forall r \in \R;$
           \vspace{1.5mm}
           \Until {$\sum_{v \in \U} \sum_{r \in \R} \bar{c}_v^r b_v^r \geq e^{Z-1} R|\U|$ or $\F = \emptyset$;}
        \end{algorithmic}
    \end{algorithm}
    
\subsubsection{Node-based Resource Allocation (NRA)}
The approximation ratio of the PRA algorithm depends on two parameters: the budget $k$  and the resource stretch $Z$. If $k$ is large and $Z$ is small, then the approximation ratio of the PRA algorithm becomes small. However, if $Z$ is large enough, then it will offset the effect of large $k$. Therefore, we design another algorithm, node-based resource allocation algorithm (NRA), which removes the dependence on $k$ but adds a constant factor to the approximation ratio. The main idea of the NRA algorithm is to make the resource allocation of each VNF-node separately based on any order. For each VNF-node in $\U$, its resources are allocated using the primal-dual technique by considering the remaining unassigned flows. The detail of the NRA algorithm is presented in Algorithm \ref{alg:NRA}. Similar to the PRA algorithm, we view the dual variable $b_v^r$ as a price for each resource. The difference here is that we consider each VNF-node separately and try to assign flows with the largest ratio of the rate $\lambda_f$ compared to the weighted demand $\sum_{r \in \R} d_f^r b_v^r$. 
\begin{algorithm}[t]
        \caption{Node-based Resource Allocation (NRA)}
        \label{alg:NRA}
        \begin{algorithmic}[1]
            \State {Input: VNF-nodes $\U$, set of flows $\F$, normalized resources $\bar{c}_v^r$, flow rates $\lambda_f$, normalized flow demands $d_f^r$.}
            \State Initialization: $x_f^v = 0, \forall f \in \F, v \in \U$
            \State Output: $\boldsymbol{X}(\U)$
            \For {each VNF-node $v \in \U$}
                \State Initialization: $b_v^r = 1/\bar{c}_v^r, ~ \forall r \in \R$
               \Repeat
               \State $f^\prime = \argmax_{f \in \F}\{ \frac{\lambda_f}{\sum_{r \in \R} d_f^r b_v^r}\};$
               \vspace{1.5mm}
               \State $x_{f^\prime}^v = \lambda_{f^\prime};$
               \vspace{1.5mm}
               \State $\F = \F \setminus \{f^\prime\};$
               \vspace{1.5mm}
               \State update $b_v^r = b_v^r(e^{Z-1}R)^{d_f^r / (\bar{c}_v^r -1 )}, ~ \forall r \in \R;$
               \vspace{1.5mm}
               \Until {$\sum_{r \in \R} \bar{c}_v^r b_v^r \geq e^{Z-1}R$ or $\F = \emptyset$;}
            \EndFor
        \end{algorithmic}
    \end{algorithm}
    
We use $\pi_{\text{NRA}}^{\{v\}}$ to denote the total traffic assigned to VNF-node $v$ by the NRA algorithm and define $\pi_{\text{NRA}}^\U \triangleq \sum_{v \in \U} \pi_{\text{NRA}}^{\{v\}}$. We state the approximation ratio of the NRA algorithm in the following lemma.
\begin{lemma}
    The approximation ratio of the  NRA  algorithm is $\pi_\text{NRA}^\U \geq \frac{Z-1}{Z-1+e Z R^{1/(Z-1)}} R_4(S)$.
    \label{lemma:NRAapproximation}
\end{lemma}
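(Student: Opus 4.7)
The plan is to apply the primal-dual analysis used for PRA on a per-node basis and then couple the resulting per-node guarantees to $R_4(S)$ through a charging argument over the flows that NRA ultimately leaves unassigned.

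\emph{Step 1 (per-node primal-dual bound).} When NRA enters its outer-loop iteration for node $v\in\U$, let $\F_v$ denote the flows still present in the working set $\F$ at that moment. The inner loop at $v$ is exactly the primal-dual scheme of \cite{briest2011approximation} specialized to Problem \eqref{eq:primal} with node set $\{v\}$ and flow set $\F_v$. Setting $|\U|=1$ in the bound invoked in the proof of Lemma~\ref{lemma:PRAapproximation} yields
\[
\pi_\text{NRA}^{\{v\}} \;\geq\; \rho\, L_v, \qquad \rho \triangleq \frac{Z-1}{eZR^{1/(Z-1)}},
\]
where $L_v$ is the optimal value of that single-node LP on flows $\F_v$.

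\emph{Step 2 (lower bound on $\sum_v L_v$).} Let $\F_{\text{end}}\subseteq\F$ be the flows that NRA never assigns anywhere. Because NRA only removes flows from its working set, $\F_{\text{end}}\subseteq\F_v$ for every $v\in\U$, and so $L_v$ is at least the single-node LP value on $\F_{\text{end}}$. For each $v$ define $a_f^v \triangleq \hat{x}_f^v(S)/\lambda_f$ for $f\in\F_{\text{end}}$ and $a_f^v \triangleq 0$ otherwise. Feasibility of $(a_f^v)$ for that single-node LP follows from two facts: the node-capacity constraints enforced by Algorithm~\ref{alg:fractional} translate, after normalization by $d_{\max}$, into $\sum_f d_f^r a_f^v \leq \bar c_v^r$; and the flow-rate constraints of Problem~\eqref{eq:nodeCapacityAllocation} give $\hat{x}_f^v(S)\leq\lambda_f$, hence $a_f^v\leq 1$. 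The LP objective at this feasible point is $\sum_{f\in\F_{\text{end}}}\hat{x}_f^v(S)$, so
\[
\sum_{v\in\U} L_v \;\geq\; \sum_{v\in\U}\sum_{f\in\F_{\text{end}}}\hat{x}_f^v(S) \;=\; \sum_{f\in\F_{\text{end}}}\hat{x}_f(S).
\]

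\emph{Step 3 (gap bound and algebraic close).} Since NRA serves each selected flow at its full rate, $\pi_\text{NRA}^\U=\sum_{f\notin\F_{\text{end}}}\lambda_f$, while $R_4(S)=\sum_f \hat{x}_f(S)$ with $\hat{x}_f(S)\leq\lambda_f$ for every $f$. Therefore
\[
R_4(S) - \pi_\text{NRA}^\U \;=\; \sum_{f\in\F_{\text{end}}}\hat{x}_f(S) + \sum_{f\notin\F_{\text{end}}}\bigl(\hat{x}_f(S)-\lambda_f\bigr) \;\leq\; \sum_{f\in\F_{\text{end}}}\hat{x}_f(S).
\]
Chaining Steps 1--3 gives $\pi_\text{NRA}^\U \geq \rho\bigl(R_4(S)-\pi_\text{NRA}^\U\bigr)$, and rearranging produces
\[
\pi_\text{NRA}^\U \;\geq\; \frac{\rho}{1+\rho}R_4(S) \;=\; \frac{Z-1}{Z-1+eZR^{1/(Z-1)}}R_4(S).
\]

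The main obstacle is Step~2: one must verify that restricting $\hat{\boldsymbol{X}}(S)$ to $\F_{\text{end}}$ remains LP-feasible at every individual VNF-node, and recognize that the simple inclusion $\F_{\text{end}}\subseteq\F_v$ is the right monotonicity for aggregating the per-node primal-dual guarantees across $v\in\U$. Once this is in place, Step~1 is a direct specialization of the argument already used for Lemma~\ref{lemma:PRAapproximation}, and Step~3 is a routine rearrangement.
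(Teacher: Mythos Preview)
Your argument is correct and follows the same skeleton as the paper's proof: a per-node instantiation of the primal--dual guarantee from \cite{briest2011approximation}, the monotonicity $\F_{\text{end}}\subseteq\F_v$, and the algebraic rearrangement to $\rho/(1+\rho)$. The one substantive difference is in how you reach $\sum_{v}L_v\geq R_4(S)-\pi_{\text{NRA}}^{\U}$: the paper first passes through $R_4(S)\leq R_3(\U)$ (Lemma~\ref{lemma:setToSequence}) and uses the optimal fractional allocation $\hat{\boldsymbol{X}}(\U)$, restricted to the unassigned flows, as the per-node LP witness; you instead stay with $R_4(S)$ and use the Algorithm~\ref{alg:fractional} output $\hat{\boldsymbol{X}}(S)$ directly. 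Your route is slightly more elementary---it bypasses Lemma~\ref{lemma:setToSequence} entirely---while the paper's route makes the role of the relaxed set-function $R_3$ more visible; both yield the identical bound.
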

\begin{proof}
     First, we define additional notations. Let $\F^\prime \subseteq \F$ denote the set of unassigned flows by the end of Algorithm \ref{alg:NRA} and $\F_v$ denote the set of unassigned flows right before considering VNF-node $v$ by Algorithm \ref{alg:NRA}. By slightly abusing the notation, we use $\text{OPT}(\{v\}|\bar{\F})$ to denote the optimal resource allocation of VNF-node $v$ considering only the subset of flows $\bar{\F}$. We have
            \begin{equation}
            \begin{aligned}
            R_4(S) & \stackrel{\text{(a)}} \leq  R_3(\U) \\
            & \stackrel{\text{(b)}} \leq \sum_{ v \in \U} \text{OPT}(\{v\}|\F) \\
            &\stackrel{\text{(c)}} = \sum_{f \in \F \setminus \F^\prime} \lambda_f  + \sum_{ v \in \U} \text{OPT}(\{v\}|\F^\prime)\\
           &\stackrel{\text{(d)}}{=}  \pi_\text{NRA}^\U  + \sum_{ v \in \U} \text{OPT}(\{v\}|\F^\prime)\\
          & \stackrel{\text{(e)}}{\leq} \pi_\text{NRA}^\U  + \sum_{ v \in \U} \text{OPT}(\{v\}|\F_v)\\
            &\stackrel{\text{(f)}}{\leq} \pi_\text{NRA}^\U + \sum_{ v \in \U} \frac{e Z}{Z-1}R^{1/(Z-1)} ~\pi_\text{NRA}^{\{v\}}\\
            & \leq\pi_\text{NRA}^\U + \frac{e Z}{Z-1}R^{1/(Z-1)} \pi_\text{NRA}^\U\\
            & = \frac{Z-1 +e Z R^{1/(Z-1)}}{Z-1} \pi_\text{NRA}^\U,
            \end{aligned}
            \end{equation} 
             where (a) follows from Lemma \ref{lemma:setToSequence}; (b) holds because we consider each node individually with all flows $\F$; (c) holds because we can consider what can be assigned from a subset of flows $\F^\prime$ and add to it all other flows $\F \setminus \F^\prime$; (d) holds because flows $\F \setminus \F^\prime$ are all assigned by the NRA algorithm;  (e) holds because $\F_v$ is a superset of $\F^\prime$. For (f), the NRA algorithm for a single VNF-node achieves an approximation ratio of $\frac{e Z}{Z-1}R^{1/(Z-1)}$ with respect to any fractional solution \cite[Lemma 5.7, Theorem 5.1]{briest2011approximation}, so (f) holds. 
        \end{proof}
        
When $Z$ goes to infinity, then the approximation ratio is $1/(e+1)$. The time complexity of the NRA algorithm is $O(F^2)$. 

 \subsection{Main Results}
    \label{subsec:mainResult}
    We state our main results in Theorems \ref{theorem:mainResult} and \ref{theorem:mainResult2}.
    \begin{theorem}
        \label{theorem:mainResult}
        The SSG-PRA algorithm has an approximation ratio of $\frac{(e-1)(Z-1)}{4 e^2 Z (k R)^{1/(Z-1)}}$ for Problem \eqref{eq:mainProblem} and becomes $\frac{e-1}{4e^2}$ when $Z \rightarrow \infty$. 
    \end{theorem}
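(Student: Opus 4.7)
The plan is to chain the approximation guarantees established throughout the paper for each relaxation step and for each subroutine, tracking how the optimal value of the original problem \eqref{eq:mainProblem} is related to the value produced by SSG-PRA.

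First, I would build an upper bound on $\text{OPT}\eqref{eq:mainProblem}$ in terms of $\text{OPT}\eqref{eq:sequencePlacement}$. Because \eqref{eq:relaxedProblem} is obtained from \eqref{eq:mainProblem} by allowing partially processed flows to be counted, every feasible solution of \eqref{eq:mainProblem} is feasible in \eqref{eq:relaxedProblem} with a weakly larger objective value, giving $\text{OPT}\eqref{eq:mainProblem} \leq \text{OPT}\eqref{eq:relaxedProblem}$. The decomposition of \eqref{eq:relaxedProblem} into \eqref{eq:relaxedAllocation} and \eqref{eq:relaxedPlacement} yields $\text{OPT}\eqref{eq:relaxedProblem} = \text{OPT}\eqref{eq:relaxedPlacement}$. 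Next, letting $\U^\star$ be an optimal set for \eqref{eq:relaxedPlacement} and $S^\star$ any permutation of $\U^\star$, Lemma~\ref{lemma:setToSequence} gives $R_4(S^\star) \geq \tfrac{1}{2} R_3(\U^\star) = \tfrac{1}{2}\text{OPT}\eqref{eq:relaxedPlacement}$, and since $S^\star$ is feasible for \eqref{eq:sequencePlacement}, we obtain $\text{OPT}\eqref{eq:sequencePlacement} \geq \tfrac{1}{2}\text{OPT}\eqref{eq:relaxedPlacement}$. Combining these three steps yields $\text{OPT}\eqref{eq:mainProblem} \leq 2\,\text{OPT}\eqref{eq:sequencePlacement}$.

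Next, I would invoke the guarantees of the two subroutines inside Algorithm~\ref{alg:proposed}. Let $S$ denote the sequence returned by the SSG algorithm and $\U = \V(S)$. By Theorem~\ref{theorem:multiVPRAPlacement}, $R_4(S) \geq \tfrac{1}{2}(1 - 1/e)\,\text{OPT}\eqref{eq:sequencePlacement}$. Then Lemma~\ref{lemma:PRAapproximation} gives $\pi_{\text{PRA}}^{\U} \geq \tfrac{Z-1}{e Z (kR)^{1/(Z-1)}} R_4(S)$. Stringing together the four inequalities yields
\begin{equation*}
\pi_{\text{PRA}}^{\U} \;\geq\; \frac{Z-1}{e Z (kR)^{1/(Z-1)}} \cdot \frac{1}{2}\!\left(1 - \tfrac{1}{e}\right) \cdot \frac{1}{2}\,\text{OPT}\eqref{eq:mainProblem} \;=\; \frac{(e-1)(Z-1)}{4 e^2 Z (kR)^{1/(Z-1)}}\,\text{OPT}\eqref{eq:mainProblem}.
\end{equation*}
The limiting case $Z \rightarrow \infty$ then follows from $\lim_{Z \to \infty}\tfrac{Z-1}{Z(kR)^{1/(Z-1)}} = 1$.

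The last step needed is to verify that $\pi_{\text{PRA}}^{\U}$ is in fact a feasible value for the original non-relaxed problem \eqref{eq:mainProblem}, so that it lower-bounds the value attained by SSG-PRA. Inspecting Algorithm~\ref{alg:PRA}, every flow $f'$ that is selected in the main loop is assigned $x_{f'}^{v_{f'}} = \lambda_{f'}$ at a single VNF-node $v_{f'} \in \U \cap \V_{f'}$ and then removed from $\F$; hence every flow counted in $\pi_{\text{PRA}}^{\U}$ is fully processed in the sense of \eqref{eq:objectiveJ_1}, and the resource constraint \eqref{eq:nodecapacity} is maintained by the stopping condition on the dual potential. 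Therefore $\pi_{\text{PRA}}^{\U} = J_1(\U, \boldsymbol{X}(\U))$, concluding the argument.

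I do not anticipate a hard obstacle here since the theorem is purely a composition of results already proven earlier; the main care needed is the bookkeeping of relaxation factors (a factor of $2$ from Lemma~\ref{lemma:setToSequence}, a factor of $\tfrac{2e}{e-1}$ from Theorem~\ref{theorem:multiVPRAPlacement}, and the primal-dual factor from Lemma~\ref{lemma:PRAapproximation}) and the verification that the PRA output is indeed a feasible fully-processed assignment for \eqref{eq:mainProblem}.
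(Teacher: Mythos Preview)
Your proposal is correct and follows essentially the same route as the paper: chain the relaxation inequality $\text{OPT}\eqref{eq:mainProblem}\le\text{OPT}\eqref{eq:relaxedProblem}=\text{OPT}\eqref{eq:relaxedPlacement}$, the factor-$\tfrac{1}{2}$ loss from Lemma~\ref{lemma:setToSequence}, the $\tfrac{1}{2}(1-1/e)$ guarantee of Theorem~\ref{theorem:multiVPRAPlacement}, and the primal-dual factor from Lemma~\ref{lemma:PRAapproximation}. Your explicit verification that the PRA output is a feasible fully-processed assignment for \eqref{eq:mainProblem} is a welcome addition that the paper leaves implicit.
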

        \begin{proof}
       The SSG-PRA algorithm has two main components: 1) VNF-nodes placement and 2) resource allocation. We use $OPT(P)$ to denote the optimal value of any problem $(P)$. We start with the result of the VNF-nodes placement using the SSG algorithm. For sequence $S$ that is selected by the SSG algorithm, we have the following result:
        \begin{equation}  
        \label{eq:placementresult2}      
        \begin{aligned}
        R_4(S) &  \stackrel{\text{(a)}}{\geq}  \frac{1}{2}(1-1/e) \text{OPT}(\text{\ref{eq:sequencePlacement}})\\
        &  \stackrel{\text{(b)}}{\geq}  \frac{1}{4}(1-1/e)\text{OPT}(\text{\ref{eq:relaxedPlacement}})\\
        & \stackrel{\text{(c)}}{=} \frac{1}{4}(1-1/e)\text{OPT}(\text{\ref{eq:relaxedProblem}}) \\
        & \stackrel{\text{(d)}}{\geq} \frac{1}{4}(1-1/e)\text{OPT}(\text{\ref{eq:mainProblem}}),
        \end{aligned}
        \end{equation}
        where (a) is due to Theorem \ref{theorem:multiVPRAPlacement}, (b) holds from Lemma \ref{lemma:setToSequence}, (c) holds because an optimal resource allocation is assumed for the objective function of Problem \eqref{eq:relaxedPlacement}, and (d) holds because Problem \eqref{eq:relaxedProblem} is a relaxed version of Problem \eqref{eq:mainProblem}. 
        
        The second component of the SSG-PRA algorithm is the resource allocation using the PRA algorithm for the sequence of VNF-nodes $S$ selected by the SSG. We have the following result:
        \begin{equation}
        \label{eq:mainTheorem}
        \begin{aligned}
        \pi_\text{PRA}^{\U} & \stackrel{\text{(a)}}{\geq} \frac{Z-1}{e Z (k R)^{1/(Z-1)}} R_4(S) \\
        & \stackrel{\text{(b)}}{\geq} \frac{(e-1)(Z-1)}{4 e^2 Z (k R)^{1/(Z-1)}} \text{OPT}(\text{\ref{eq:mainProblem}}),
        \end{aligned}
        \end{equation} 
        where (a) comes from the approximation ratio of the PRA algorithm in Lemma \ref{lemma:PRAapproximation}, and (b) holds from Eq. \eqref{eq:placementresult2}. 
        Therefore, the result of Theorem \ref{theorem:mainResult} follows.
        \end{proof}        
       
    \begin{theorem}
        \label{theorem:mainResult2}
        The SSG-NRA algorithm has an approximation ratio of $\frac{(e-1)(Z-1)}{4e(Z-1+e Z R^{1/(Z-1)})}$ for Problem \eqref{eq:mainProblem} and becomes $\frac{e-1}{4e^2+4e}$ when $Z \rightarrow \infty$. 
    \end{theorem}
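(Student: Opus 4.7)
The plan is to essentially mirror the two-stage argument used for Theorem \ref{theorem:mainResult}, replacing the PRA guarantee with the NRA guarantee. The SSG-NRA algorithm first applies the SSG procedure to obtain a sequence $S$, and then invokes NRA on the underlying set $\U = \V(S)$ of VNF-nodes, so the overall approximation ratio is the product of the placement loss and the NRA allocation loss, measured against $\text{OPT}\eqref{eq:mainProblem}$.

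First I would restate the placement chain that was already established in Eq. \eqref{eq:placementresult2} during the proof of Theorem \ref{theorem:mainResult}. Nothing here depends on which allocation routine is used downstream, so the same inequality
\begin{equation*}
R_4(S) \;\geq\; \tfrac{1}{4}(1-1/e)\,\text{OPT}\eqref{eq:mainProblem}
\end{equation*}
carries over verbatim, leaning on Theorem \ref{theorem:multiVPRAPlacement} (sequence-submodular greedy), Lemma \ref{lemma:setToSequence} (the $1/2$ loss from sequence relaxation), and the fact that Problem \eqref{eq:relaxedProblem} upper bounds Problem \eqref{eq:mainProblem}.

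Next I would invoke Lemma \ref{lemma:NRAapproximation}, which gives $\pi_\text{NRA}^\U \geq \frac{Z-1}{Z-1+e Z R^{1/(Z-1)}} R_4(S)$, and chain the two inequalities to conclude
\begin{equation*}
\pi_\text{NRA}^\U \;\geq\; \frac{Z-1}{Z-1+e Z R^{1/(Z-1)}}\cdot\tfrac{1}{4}(1-1/e)\,\text{OPT}\eqref{eq:mainProblem} = \frac{(e-1)(Z-1)}{4e\bigl(Z-1+e Z R^{1/(Z-1)}\bigr)}\,\text{OPT}\eqref{eq:mainProblem}.
\end{equation*}
Taking $Z\to\infty$, the factor $\tfrac{Z-1}{Z-1+eZR^{1/(Z-1)}}$ tends to $\tfrac{1}{1+e}$ (since $R^{1/(Z-1)}\to 1$), which, multiplied by $\tfrac{e-1}{4e}$, gives the claimed limiting constant $\tfrac{e-1}{4e^2+4e}$.

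There is no genuine obstacle here — the heavy lifting has already been done in Lemmas \ref{lemma:setToSequence}, \ref{lemma:submodularity}, \ref{lemma:NRAapproximation} and Theorem \ref{theorem:multiVPRAPlacement}. The only point that deserves a sentence of care is verifying that the NRA ratio from Lemma \ref{lemma:NRAapproximation} is stated relative to $R_4(S)$ (rather than to some optimum of the relaxed problem), so that multiplying it by the placement bound $R_4(S) \geq \tfrac{1}{4}(1-1/e)\text{OPT}\eqref{eq:mainProblem}$ is legitimate; since $R_4(S)$ is exactly the quantity produced by the SSG stage that NRA then operates on, this compatibility holds automatically, and the theorem follows.
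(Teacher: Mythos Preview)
Your proposal is correct and is exactly the approach the paper takes: the paper's proof of Theorem \ref{theorem:mainResult2} simply states that the argument follows the same steps as Theorem \ref{theorem:mainResult}, i.e., reuse the placement chain \eqref{eq:placementresult2} and swap in the NRA guarantee of Lemma \ref{lemma:NRAapproximation} for the PRA guarantee.
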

    \begin{proof}
        The proof follows the same steps as the proof of Theorem \ref{theorem:mainResult}.
    \end{proof}
    
    \section{Numerical Results}
       \begin{figure*}[t]
        \centering
        \begin{subfigure}{0.32\linewidth}
            \centering
            \includegraphics[width=0.99\linewidth]{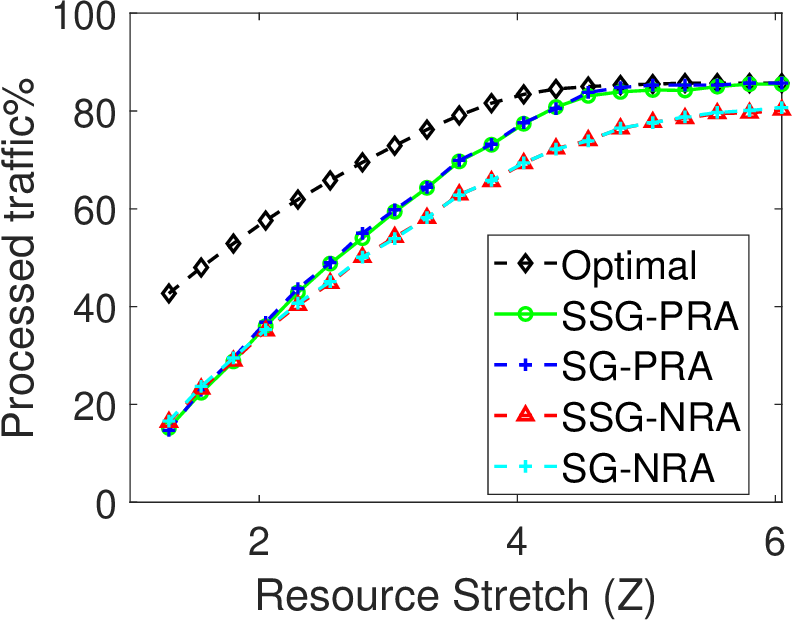}
            \caption{Budget = 3 VNF-nodes}
            \label{fig:Abilene_TotalFlow_3_132includeAll}
        \end{subfigure}
        \begin{subfigure}{0.32\linewidth}
            \centering
            \includegraphics[width=0.99\linewidth]{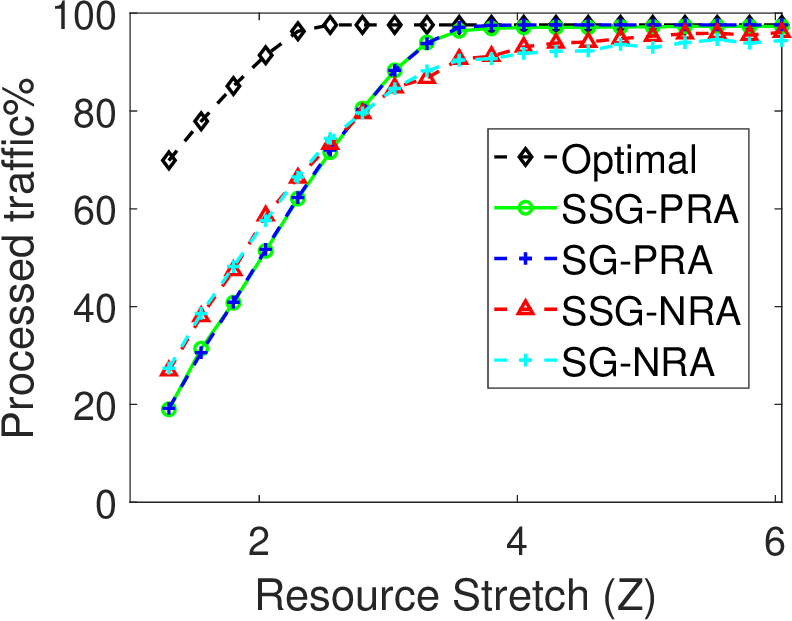}
            \caption{Budget = 6 VNF-nodes}
            \label{fig:Abilene_TotalFlow_6_132includeAll}
        \end{subfigure}
        \begin{subfigure}{0.32\linewidth}
            \centering
            \includegraphics[width=0.99\linewidth]{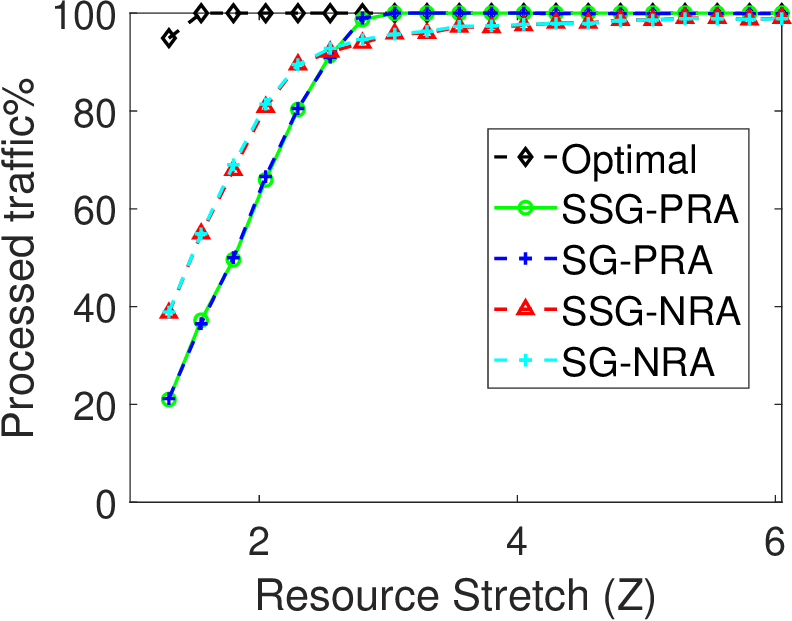}
            \caption{Budget = 10 VNF-nodes}
            \label{fig:Abilene_TotalFlow_10_132includeAll}
        \end{subfigure}
        \caption{Evaluation of Abilene dataset with different budget $k$ and resource stretch $Z$}
        \label{fig:abilene}
    \end{figure*} 
 
    In this section, we complement our theoretical analysis of the proposed algorithms with a trace-driven simulation study. We compare the proposed algorithms with the optimal solution, obtained by solving the Integer Linear Program (ILP) formulation \eqref{eq:mainProblem} using Gurobi solver (Gurobi 8.1.1). In addition, we conjecture that the objective function of placement subproblem \eqref{eq:relaxedPlacement} is submodular. Therefore, we present the following two heuristics (SG-PRA algorithm and SG-NRA algorithm) based on this conjecture. In both heuristics, the placement is implemented in a similar way to that of the SSG algorithm, called Submodular Greedy (SG) algorithm \cite{nemhauser1981maximizing}. Specifically, we start with an empty solution of VNF-nodes $\U$; in each iteration, we add a node that has the maximum marginal contribution to $\U$, i.e., a node that leads to the largest increase in the value of the objective function $R_3(\U)$. We repeat the above procedure until $k$ VNF-nodes have been selected. Then, the resource allocation is implemented using the PRA (resp., NRA) algorithm for the SG-PRA (resp., SG-NRA) algorithm. We evaluate all algorithms based on the percentage of the processed traffic achieved by them, which is defined as the ratio between the total volume of the traffic processed by the VNF-nodes and the total traffic volume. 
    
    Note that we present the results of the optimal solution found by an ILP solver as a benchmark for comparisons only. While the ILP solver seems to work reasonably well for some problem instances considered in our simulations, the multi-VPRA problem is NP-hard in general (Lemma \ref{lemma:nphardness}). That is, there is no guarantee that any problem instance can be efficiently solved, and it may take a prohibitively long time to solve the problem in the worst-case scenarios.
    
    
    \subsection{Evaluation Datasets}

    \subsubsection{Abilene Dataset}
     We consider the Abilene dataset collected from an educational backbone network in North America \cite{abilene}. The network consists of 12 nodes and 144 flows. Each flow rate was recorded every five minutes for 6 months. Also, OSPF weights were recorded, which allows us to compute the shortest path of each flow based on these weights. In our experiments, we set the flow rate to the recorded value of the first day at 8:00 pm. We consider two types of resources (i.e., $R =2$), and the demand of each flow is randomly chosen between 0 and 20 (i.e., $\delta_f^r \in [0, 20]$). The total available resource is set to the maximum total demand of flows $d_\text{max}$ multiplied by a scaling parameter $Z >1$. 
     
     \subsubsection{SNDlib Datasets}
    We also consider two other datasets from SNDlib \cite{orlowski2010sndlib}: Cost266 with 37 nodes and 1332 flows, and ta2 with 65 nodes and 1869 flows. For Cost266, the link's routing cost is available, so we use that to compute the shortest path of each flow. For ta2, we use hop-count-based shortest path. The setting of resources is the same as that of the Abilene dataset.
    
    \subsection{Evaluation Results}
      \begin{figure*}[t]
        \centering
        \begin{subfigure}[t]{0.24\linewidth}
            \centering
            \includegraphics[width=0.99\linewidth]{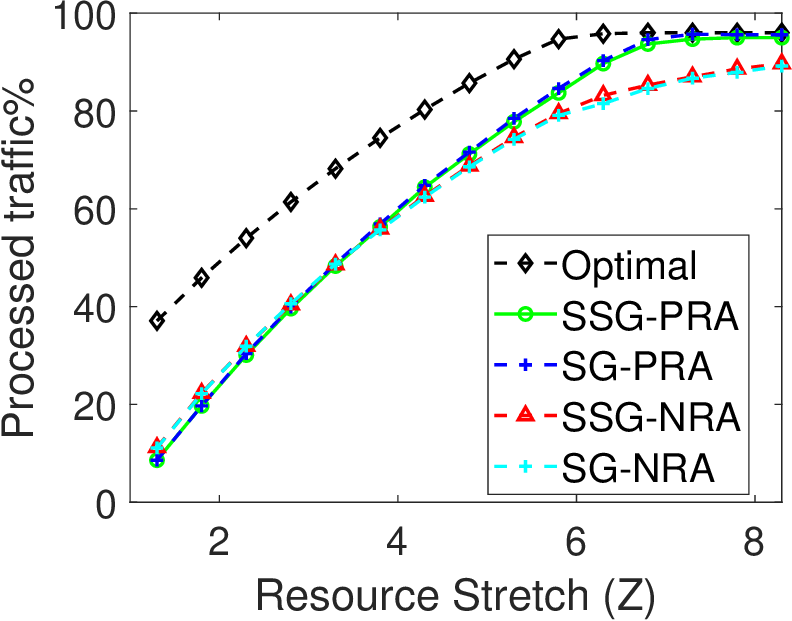}
            \caption{Budget=10 VNF-nodes \vspace{5mm}}
            \label{fig:cost266_TotalFlow_10_1332includeAll}
        \end{subfigure}
        \begin{subfigure}[t]{0.24\linewidth}
            \centering
            \includegraphics[width=0.99\linewidth]{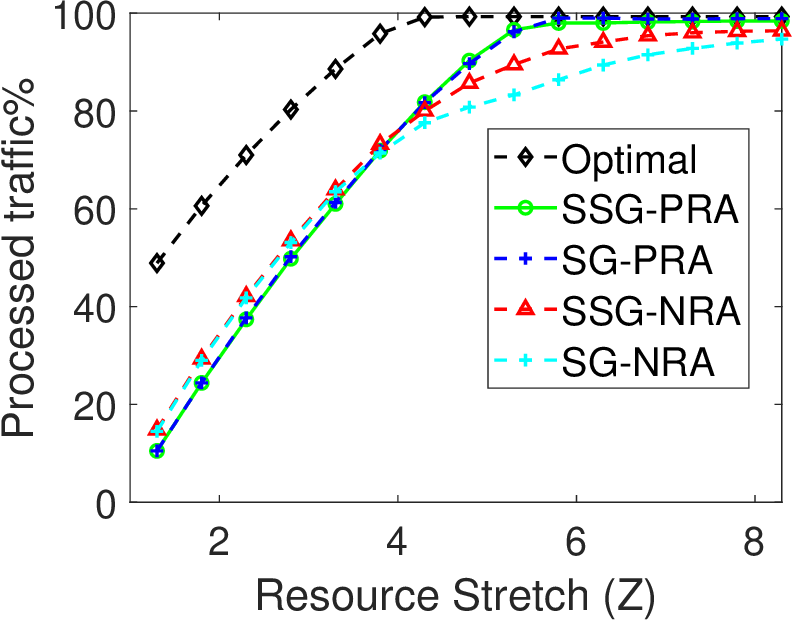}
            \caption{Budget = 15 VNF-nodes \vspace{5mm}}
            \label{fig:cost266_TotalFlow_15_1332includeAll}
        \end{subfigure}
        \begin{subfigure}[t]{0.24\linewidth}
            \centering
            \includegraphics[width=0.99\linewidth]{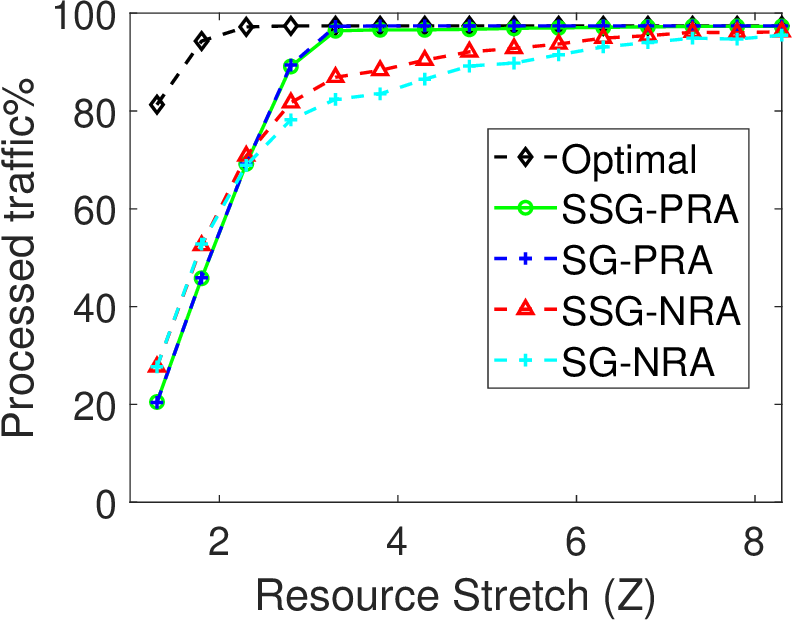}
            \caption{Budget = 10 VNF-nodes}
            \label{fig:ta2_TotalFlow_10_1869includeAll}
        \end{subfigure}
        \begin{subfigure}[t]{0.24\linewidth}
            \centering
            \includegraphics[width=0.99\linewidth]{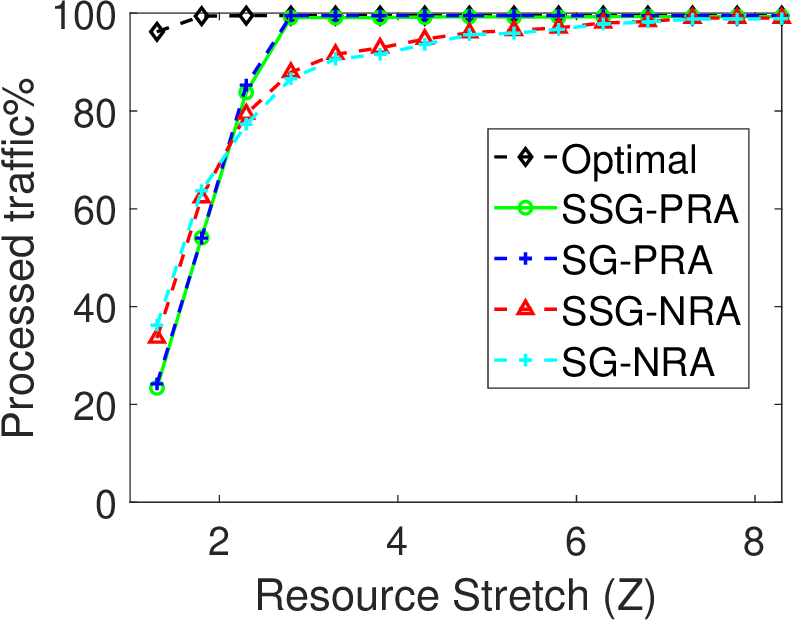}
            \caption{Budget = 15 VNF-nodes}
            \label{fig:ta2_TotalFlow_15_1869includeAll}
        \end{subfigure}
        \caption{Evaluation of Cost266 dataset (a-b) and ta2 dataset (c-d)}
        \label{fig:otherTopologies}
    \end{figure*}
    
    We start with the Abilene dataset, where we study the effect of having different values of resource stretch $Z$ and budget $B$. Remember that $Z$ is the ratio of the minimum available resource to the maximum flow demand. We consider a budget of  3, 6, and 10 VNF-nodes. The results are presented in Fig.~\ref{fig:abilene}. From the results, we make the following key observations.
    
    First, we can see that the simulation results for both the SSG-PRA and SSG-NRA algorithms agree with their approximation ratios presented in Theorems \ref{theorem:mainResult} and \ref{theorem:mainResult2} in that when the budget or $Z$ is small, the SSG-NRA performs better and vice versa. Specifically, we start with Fig.~\ref{fig:abilene}(\subref{fig:Abilene_TotalFlow_3_132includeAll}) when the budget is 3. When the amount of resources is small or there are flows with huge demand (i.e., $Z$ is small), the SSG-NRA algorithm is slightly better, but since the number of resources and nodes (i.e.,  $R|\U|$) is small anyway, it does not affect the performance of the SSG-PRA algorithm much. When $Z$ becomes large (either by having larger amount of resources or by having flows with smaller demand to make $Z \geq 4$), the effect of the terms $R|\U|^{Z-1}$ and $R^{Z-1}$ diminishes, but the effect of the constant term of the SSG-NRA algorithm remains, which corresponds to a slightly worse performance for larger $Z$. By doubling the budget to 6 VNF-nodes, we can see in Fig.~\ref{fig:abilene}(\subref{fig:Abilene_TotalFlow_6_132includeAll}) that the performance of the SSG-NRA algorithm is better than the SSG-PRA algorithm when $Z$ is small (i.e., $Z \leq 2.5$). This is because when $Z$ is small and $R |\U|$ is large, there is a high chance that the stopping condition of the PRA algorithm is satisfied early although some nodes still have large unused resources. In contrast, for the NRA algorithm, we consider nodes one by one, and if the stopping condition is satisfied early, it will only affect the node under consideration and the algorithm will continue allocating the resources of the other nodes. The same trend can also be seen in Fig.~\ref{fig:abilene}(\subref{fig:Abilene_TotalFlow_10_132includeAll}). 
    
    Second, although the SSG-NRA algorithm works better when $Z$ is small, sometimes it fails to reach the performance of the optimal solution even when $Z$ is large (see Fig.~\ref{fig:abilene}(\subref{fig:Abilene_TotalFlow_3_132includeAll})). Increasing the budget helps alleviating this problem with SSG-NRA algorithm, but still it needs at least twice the resource stretch $Z$ needed by the SSG-PRA algorithm to reach a similar performance of the optimal solution (see Figs.~\ref{fig:abilene}(\subref{fig:Abilene_TotalFlow_6_132includeAll}) and \ref{fig:abilene}(\subref{fig:Abilene_TotalFlow_10_132includeAll})). The proposed algorithms achieve at least $1/2$ of the optimal solution, which verifies our theoretical results. 
    
    Third, comparing the proposed algorithms with the two heuristics, we can see that the proposed algorithms perform almost the same as the heuristics. The proposed algorithms even work better in multiple occasions as for the SSG-NRA algorithm. That means even if our conjecture that $R_3(\U)$ is submodular is correct,  the loss by considering the second-level relaxation (i.e., the $1/2$-approximation factor in Theorem~\ref{theorem:multiVPRAPlacement}) is negligible. However, the second-level relaxation is important as it allows to draw a connection to the sequence submodular theory and establish the performance guarantee of the SSG algorithm. 

    Fourth, The results suggest that in order to gain the best performance in term of total processed traffic, ISPs have two options: 1) either to scale resources vertically by provisioning more resources at each node (i.e., makes $Z$ large); or 2) scale horizontally by deploying more VNF-nodes. Both of these options have shown promising performance as can be seen in Fig.~\ref{fig:abilene}.
    
    Furthermore, we extend the evaluation to other datasets with a larger number of nodes and flows in Fig.~\ref{fig:otherTopologies}. We consider Cost266 dataset (37 nodes and 1332 flows) and ta2 dataset (65 nodes and 1869 flows). We consider two settings of budget of 10 and 15 VNF-nodes. Comparing with the proposed algorithms, we can see a similar trend to that of Fig.~\ref{fig:abilene} in that the SSG-NRA algorithm works better for a smaller $Z$ and vice-versa for the SSG-PRA algorithm. Comparing both algorithms with the optimal solution, the proposed algorithms are also within $1/2$ of the value achieved by the optimal solution. In addition, we can see that the heuristics and the proposed algorithms perform very similarly to each other and that no algorithm constantly dominates the other. We note that although the resource stretch $Z$ is the same for Cost266 dataset and ta2 dataset, the actual amount of resources is different because the maximum flow rate of ta2 dataset is 140 times more than that of the cost255 dataset. However, the total flow rates of ta2 dataset are 50 times less than the total flow rates of Cost266 dataset. That explains why for a similar budget, we have a better performance for all algorithms under ta2 dataset (e.g., Fig.~\ref{fig:otherTopologies}(\subref{fig:ta2_TotalFlow_10_1869includeAll}))  compared to Cost266 dataset (e.g., Fig.~\ref{fig:otherTopologies}(\subref{fig:cost266_TotalFlow_10_1332includeAll})). 
    
    \blue{In addition, we also study the impact on the resource utilization of VNF-nodes. For each VNF-node, we compute the utilization per resource type as the amount of used resource divided by the total amount of each available resource, which is then averaged over all resource types and over all VNF-nodes. We compare our proposed algorithms with the optimal solution and present these results in Fig.~\ref{fig:utilization}. The SG-PRA and SG-NRA algorithms exhibit similar results to our algorithms as the utilization primarily depends on the resource allocation component, which is the same for both the SG-PRA/SG-NRA algorithms and our proposed algorithms. We can observe that the utilization increases as the resource stretch increases, mainly due to the increase in the total processed traffic as shown in Figs.~\ref{fig:otherTopologies}(\subref{fig:cost266_TotalFlow_10_1332includeAll}) and \ref{fig:otherTopologies}(\subref{fig:ta2_TotalFlow_10_1869includeAll}). However, the utilization naturally decreases once close to 100\% of the traffic has been processed (see Fig.~\ref{fig:utilization}(\subref{fig:utilization_ta2_10nodes})). This is because more resource becomes available but not used.}
    
    \blue{Finally, we present the average running time of all algorithms in Table~\ref{table:running_time}. We observe that all algorithms have a small running time; even for the ILP, solved using the Gurobi solver, the running time is small as well. As mentioned earlier, while the ILP solver seems to work reasonably well for the problem instances considered in our simulations, there is no guarantee that the ILP can be efficiently solved due to its NP-hardness. Therefore, we use the it as a benchmark for comparison purposes only.}
    
    \begin{figure}[t]
        \centering
        \begin{subfigure}[t]{0.48\linewidth}
            \centering
            \includegraphics[width=0.99\linewidth]{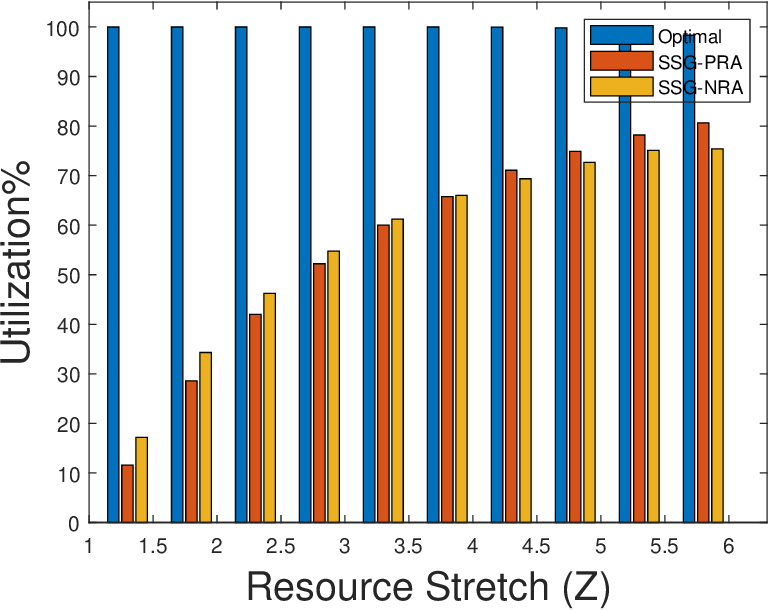}
            \caption{Cost266 \vspace{1mm}}
            \label{fig:utilization_cost266_10nodes}
        \end{subfigure}
        \begin{subfigure}[t]{0.48\linewidth}
            \centering
            \includegraphics[width=0.99\linewidth]{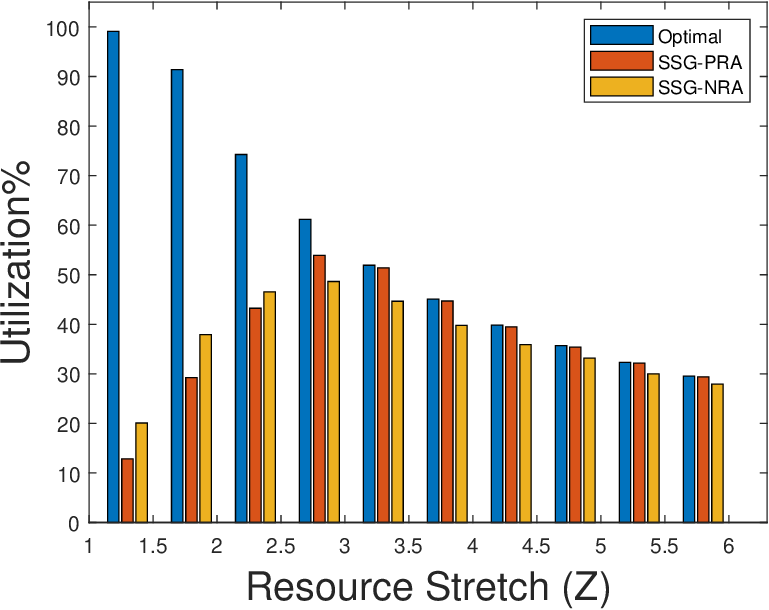}
            \caption{Ta2 \vspace{1mm}}
            \label{fig:utilization_ta2_10nodes}
        \end{subfigure}
         \caption{\blue{VNF-node utilization with budget = 10 VNF-nodes}}
        \label{fig:utilization}
    \end{figure}
    
    \begin{table*}[t]
        \centering
        \begin{tabular}{| c | c |c| c | c| c|}
            \hline
            \backslashbox{Dataset}{Algorithm} & ILP & SSG-PRA & SSG-NRA & SG-PRA & SG-NRA \\[1ex]  \hline
            cost266 & 17 & 8 & 8 & 9 & 9 \\ [1ex]  \hline
            ta2 & 10 & 12 & 11 & 12 & 12 \\ [1ex]  \hline
        \end{tabular}
     \caption{\blue{Running time (seconds)}}
        \label{table:running_time}
    \end{table*}
    
    \section{Related Work}
    The transition to NFV often happens in two phases: the planning phase and the production phase. In the planning phase, the main focus is on deciding where to introduce the NFV capabilities to maximize the benefit (i.e., the placement problem). Since this phase happens before the production phase, we can assume certain behaviors about the production phase and use some historical traces to project flow demands and routing. In the production phase, finer-grained optimization will be considered, such as flow admission and routing. In this work, we focus on the planning phase and assume a fixed routing. In the following, we will discuss related work on several relevant topics.
    
    \textbf{Placement:} The placement problem has been considered in different domains such as NFV (e.g., \cite{sallam2021joint}), software-defined networks (SDN) (e.g., \cite{poularakis2017one}), and edge cloud computing (e.g., \cite{He2018It}). In NFV, several studies (e.g., \cite{sang2017provably,chen2018virtual, tomassilli2018provably}) consider the placement of a minimum number of VNF instances to cover all flows. A single type of network functions is considered in  \cite{sang2017provably,chen2018virtual, shi2018competitive, lukovszki2015online}, and the case of multiple network functions is considered in \cite{tomassilli2018provably, ren2019embedding, sallam2018shortest, Feng2017, woldeyohannes2018cluspr, poularakis2020approximation, gu2020service}. However, these work neglects either the budget constraint or the multi-dimensional resource allocation. The work in \cite{lukovszki2018approximate} considers the placement of middleboxes to make the shortest path between communicating pairs under a threshold. A generalized version of \cite{lukovszki2018approximate} is presented in \cite{poularakis2020service}, which considers the problem of joint service placement and request routing in mobile edge computing networks. Again, this work does not consider multiple network functions or budget constraint. Other work considers different objectives, such as delay minimization (e.g. \cite{sun2020energy, ren2020efficient}), \blue{energy efficiency (e.g., \cite{zhang2020energy, soualah2017energy})}, fault tolerance (e.g., \cite{yuan2020fault}), \blue{and revenue maximization (e.g., \cite{nejad2018vspace, golkarifard2021dynamic})}. 
    
    \blue{\textbf{Budget-constrained Resource Allocation:} Budget-constrained resource allocation has been considered in prior studies (e.g., \cite{xu2020qos, poularakis2017one, He2018It, farhadi2021service, sallam2021joint}). In \cite{xu2020qos}, the authors consider the problem of request routing and capacity allocation with the objective of maximizing the total traffic of admitted requests under a limited budget. 
    They assume that all nodes in the network already have NFV capabilities and focus on maximizing the utilization of the budget in the case that admitting a request has a certain cost. In the SDN domain, the work in \cite{poularakis2017one} considers the placement of SDN-enabled routers to maximize the total processed traffic. They consider a budget constraint but neglect the resource constraint. Similarly, in the work on edge cloud computing \cite{He2018It, farhadi2021service}, although the budget and resource constraints are considered, their proposed solution is only for a special case, and the overall problem does not consider the multi-dimensional setting. In our previous work \cite{sallam2021joint}, we study the NFV placement problem where budget, resource, and fully processed flow constraints are considered, but we only consider one type of network function and one type of resource.}

    \textbf{Multi-dimensional Resource Allocation:} To the best of our knowledge, the multi-dimensional setting has rarely been considered except in a limited number of studies. In \cite{yu2018ensc}, the authors consider multi-resource VNFs with a focus on the analysis of the vertical scaling (scaling up/down of some resources) and horizontal scaling (the number of VNFs instances). The work of \cite{even2013competitive} focuses only on request admission and routing. The work of \cite{wang2017multi} also considers the multi-resource setting, but the focus is on how to balance the load among the servers, taking into consideration the different demand of network functions for each resource. In \cite{luo2020online}, the authors consider servers with two types of resources, and their objective is to serve all network flows by placing network functions on these servers with minimum cost. Our work considers all three constraints of budget, resource, and fully processed flows, as well as the multi-dimensional setting. 
    
    \textbf{Sequence Submodularity:} The concept of sequence (or string) submodularity is a generalization of submodularity, which has been recently introduced in several studies (e.g., \cite{streeter2009online, zhang2016string, alaei2010maximizing, tschiatschek2017selecting}). It models objective functions that depend on the sequence of actions. It has been shown in \cite{streeter2009online} that a simple greedy algorithm can achieve an approximation ratio of $(1 - 1/e)$ for maximizing forward-monotone, backward-monotone, and sequence-submodular functions. To the best of our knowledge, we are the first to utilize the concept of sequence submodularity for the placement problem in NFV. Although in \cite{alaei2010maximizing}, the backward-monotone property is not assumed, per our investigation, there is a crucial step in the proof that requires this property. However, the objective function of our 
    placement subproblem (i.e., Problem \eqref{eq:sequencePlacement} defined in Section \ref{subsec:second_relaxation}) does not satisfy this backward-monotone property, rendering the results of \cite{alaei2010maximizing} inapplicable to our problem. To address this new challenge, we introduce the concept of approximate backward-monotone and extend the previous result to this generalization.
    
    \blue{\textbf{Preliminary Version:} A preliminary version of this work was presented at IEEE ICNP 2019 \cite{sallam2019placement}. This extended journal version includes a correction of Lemma 4 in \cite{sallam2019placement} (Theorem \ref{theorem:multiVPRAPlacement} in this version). In \cite{sallam2019placement}, we applied the results in \cite[Theorem 3]{alaei2010maximizing}, which states that for an objective function that is forward-monotone and sequence-submodular, the greedy algorithm (Algorithm~\ref{alg:SSG}) achieves an approximation ratio of $(1 - 1/e)$. However, it turns out that we also need the objective function to satisfy a backward-monotone property (defined in Section \ref{subsec:sequence_submodular}) in order to have the $(1 - 1/e)$ approximation. This is consistent with the results in other highly relevant work (e.g., \cite{streeter2009online}). We show that our objective function does not satisfy the backward-monotone property but satisfies an approximate version of the property. Based on that, we derive the approximation ratio of the greedy algorithm for forward-monotone, approximate backward-monotone, and sequence-submodular functions and apply the result to our problem.}

    \section{\blue{Conclusion and Future Work}}
    In this paper, we considered the problem of placement and resource allocation of VNF-nodes. We showed that considering the multi-dimensional setting along with the budget, limited resources, and fully flow processing constraints introduces several new challenges. However, through a two-level relaxation, we were able to develop an efficient placement algorithm. In addition, we utilized the primal-dual technique to design efficient resource allocation algorithms that properly handle the multi-dimensional setting. Although the second-level relaxation results in a smaller approximation ratio (a factor of 1/2), we showed through simulation that its impact of the empirical performance is negligible. Besides, the simulation results agree with the derived approximation ratio of both resource allocation algorithms. Specifically, the simulation showed that for a smaller resource stretch $Z$ and larger number of nodes, the NRA algorithm works better; when $Z$ becomes large enough, the PRA algorithm is better than the NRA algorithm and becomes close to the optimal solution earlier. 
    
    \blue{In our future work, we would like to consider the following interesting directions. First, we will consider service function chaining, where the network functions required for each flow must be in a specific order. Second, we will also consider heterogeneous node cost. It is worth noting that in our previous work \cite{sallam2021joint}, we have considered heterogeneous node cost where a single type of resource and a single type of network function are assumed, and we were able to develop efficient algorithms with approximation ratio guarantees. However, it is unclear whether one can extend such algorithms to the multi-dimensional setting and still achieve certain performance guarantees. Finally, it would be interesting to explore the possibility of developing polynomial-time approximation schemes (PTAS) for the resource allocation subproblem and the overall problem we consider.} 
    
    \section{Proofs}
\subsection{Proof of Lemma \ref{lemma:prefixequal}}
\label{proof:prefixequal}
 \begin{proof}
       Based on Algorithm \ref{alg:fractional}, nodes in $S_3$ will be considered first when solving for sequence $S_1$ and $S_2$, and in the same order. Therefore, the amount of assigned traffic to nodes in $S_3$ will be the same.
  \end{proof}
\subsection{Proof of Lemma \ref{lemma:v_from_lambda_S}}
\label{proof:v_from_lambda_S}

  \begin{proof}
  First, we can express the right-hand side of Eq. \eqref{eq:v_from_lambda_S} as follows:
  \begin{equation*}
      \begin{aligned}
        & R_4(S_1 \oplus u| \boldsymbol{\lambda}) - R_4(S_1 | \boldsymbol{\lambda}) \\
        = & \sum_{i = 1}^{|S_1|} \hat{y}_i(S_1 \oplus u) +  \hat{y}_{|S_1 \oplus u|}(S_1 \oplus u) - \sum_{i = 1}^{|S_1|} \hat{y}_i(S_1) \\
        = &  \hat{y}_{|S_1 \oplus u|}(S_1 \oplus u),
      \end{aligned}
  \end{equation*}
  where the last equality holds because $\hat{y}_i(S_1 \oplus u) = \hat{y}_i(S_1)$ for $i \leq |S_1|$ from Lemma \ref{lemma:prefixequal}. Recall from Algorithm \ref{alg:fractional} that $\hat{y}_{|S_1 \oplus u|}(S_1 \oplus u)$ is the value of the optimal solution of Problem \eqref{eq:nodeCapacityAllocation} for node $u$ given $[\hat{y}_1(S_1 \oplus u), \dots, \hat{y}_{|S_1|}(S_1 \oplus u), 0]$, which we include in the following to easily navigate the proof:
    \begin{subequations}
    \label{eq:nodeCapacityAllocation_2}
    \begin{align}
    & \underset{\boldsymbol{X}(S_1 \oplus u)}{\text{maximize}} \quad  \sum_{f\in \F}  x_f^{u} \nonumber\\
    & \text{subject to} \nonumber \\
    & \sum_{v \in \V_f \cap \V(S_1 \oplus u)} x_f^v \leq \lambda_f, && \forall f \in \F \text{ and } v \in \V_f, \label{const:a}\\
    & x_f^v=0, && \forall f \in \F \text{ and } v \notin \V_f, \label{const:b}\\
    & \sum_{\phi \in \Phi} \beta_\phi^r \sum_{f \in \F(\phi)} x_f^{v_i} \leq c_{v_i}^r, && \forall  r \in \R,~  i=1, \dots, |S_1 \oplus u|, \label{const:c}\\
    & \sum_{f \in \F} x_f^{v_i} =  \hat{y}_i(S_1 \oplus u), && i = 1, \dots, |S_1|. \label{const:d}
    \end{align}
\end{subequations}

    On the other hand, $R_4(u | (\boldsymbol{\lambda} - \hat{\boldsymbol{x}}(S_2)))$ corresponds to solving the following problem:
     \begin{equation}
    \label{eq:node_u_capacity_allocation}
    \begin{aligned}
    & \underset{\boldsymbol{X}(u)}{\text{maximize}} \quad  \sum_{f\in \F}  x_f^{u}\\
    & \text{subject to} \\
    & x_f^u \leq \lambda_f - \hat{x}_f(S_2), && \forall f \in \F \text{ and } u \in \V_f \\
    & x_f^u=0, && \forall f \in \F \text{ and } u \notin \V_f, \\
    & \sum_{\phi \in \Phi} \beta_\phi^r \sum_{f \in \F(\phi)} x_f^{u} \leq c_{u}^r, && \forall  r \in \R.
    \end{aligned}
    \end{equation}

    Recall that $\hat{\boldsymbol{X}}(S_2)$ denotes the flow assignment matrix of sequence $S_2$ at the end of Algorithm \ref{alg:fractional}; we also use $\hat{\boldsymbol{X}}(u)$ to denote the flow assignment matrix of sequence $(u)$ after solving Problem \eqref{eq:node_u_capacity_allocation}. We construct a flow assignment matrix $\bar{\boldsymbol{X}}(S_1 \oplus u)$ by concatenating the flow assignment of the first $|S_1|$ nodes in $\hat{\boldsymbol{X}}(S_2)$ with the flow assignment of node $u$ in $\hat{\boldsymbol{X}}(u)$, i.e., we let $\bar{\boldsymbol{X}}(S_1 \oplus u) = [ \hat{\boldsymbol{X}}(S_2)_{(:, 1:|S_1|)},~ \hat{\boldsymbol{X}}(u)]$.  It is easy to see that the value of the constructed flow assignment matrix $\bar{\boldsymbol{X}}(S_1 \oplus u)$ is equal to $R_4(u | (\boldsymbol{\lambda} - \hat{\boldsymbol{x}}(S_2)))$. To conclude the lemma, we need to show that $\bar{\boldsymbol{X}}(S_1 \oplus u)$ is a feasible solution to Problem \eqref{eq:nodeCapacityAllocation_2}. We do so by showing that each constraint of Problem \eqref{eq:nodeCapacityAllocation_2} is satisfied as follows.
    
    1) Constraint \eqref{const:a}. For each flow $f$, we have
    \begin{equation*}
        \begin{aligned}
          & \sum_{v \in \V_f \cap \V(S_1 \oplus u)}  \bar{x}_f^v(S_1 \oplus u) \\
        \stackrel{\text{(a)}} = & \sum_{i =1}^{|S_1|} \bar{x}_f^{v_i}(S_1 \oplus u) +  \bar{x}_f^u(S_1 \oplus u) \\
       \stackrel{\text{(b)}} = &  \sum_{i =1}^{|S_1|} \hat{x}_f^{v_i}(S_2) + \hat{x}_f^u((u)) \\
         \stackrel{\text{(c)}} \leq & \sum_{i =1}^{|S_1|} \hat{x}_f^{v_i}(S_2) +  \lambda_f - \hat{x}_f(S_2) \\
         \stackrel{\text{(d)}} = & \sum_{i =1}^{|S_1|} \hat{x}_f^{v_i}(S_2) +  \lambda_f - \sum_{i =1}^{|S_2|} \hat{x}_f^{v_i}(S_2) \\
        \stackrel{\text{(e)}} \leq & \lambda_f,
        \end{aligned}
    \end{equation*}
   where (a) holds because $\bar{x}_f^v(S_1 \oplus u) = 0$ for $v \notin \V_f$; (b) follows from the way we constructed $\bar{\boldsymbol{X}}(S_1 \oplus u)$; (c) holds because $\hat{x}_f^u(u) \leq \lambda_f - \hat{x}_f(S_2)$ from Problem \eqref{eq:node_u_capacity_allocation}; (d) holds from the definition of $\hat{x}_f(S_2)$ in Eq. \eqref{eq:gamma_j_S}; (e) holds because $|S_1| \leq |S_2|$.
   
   2) Constraint \eqref{const:b}. This constraint is satisfied by the feasibility of $\hat{\boldsymbol{X}}(S_2)$ and $\hat{\boldsymbol{X}}(u)$.
   
   3) Constraint \eqref{const:c}. The flow assignment matrix $\hat{\boldsymbol{X}}(S_2)_{(:, 1:|S_1|)}$ satisfies Constraint \eqref{const:c} for sequence $S_1$ and does not assign any traffic to node $u$. Similarly, the flow assignment matrix $\hat{\boldsymbol{X}}(u)$ satisfies Constraint \eqref{const:c} for node $u$ without affecting the assigned traffic to sequence $S_1$. Therefore, the constructed flow assignment matrix $\bar{\boldsymbol{X}}(S_1 \oplus u)$ satisfies Constraint \eqref{const:c} for sequence $S_1 \oplus u$.
   
   4) Constraint \eqref{const:d}. For $i =1, \dots, |S_1|$, we have
   \begin{equation*}
        \begin{aligned}
       \sum_{f \in \F} \bar{x}_f^{v_i}(S_1 \oplus u) &  \stackrel{\text{(a)}} = \sum_{f \in \F} \hat{x}_f^{v_i}(S_2) \\
       & \stackrel{\text{(b)}} = \hat{y}_{i}(S_2) \\
       & \stackrel{\text{(c)}} = \hat{y}_i(S_1 \oplus u),
        \end{aligned}
    \end{equation*}
   where (a) follows from the way we constructed $\bar{\boldsymbol{X}}(S_1 \oplus u)$; (b) holds from the feasibility of $\hat{\boldsymbol{X}}(S_2)_{(:, 1:|S_1|)}$; (c) holds because  $\hat{y}_i(S_2) = \hat{y}_i(S_1 \oplus u)$ for $i \leq |S_1|$ from Lemma \ref{lemma:prefixequal}. 
   
  \end{proof}
  
\subsection{Proof of Lemma \ref{lemma:setToSequence}}
\label{proof:setToSequence}
\begin{proof}
Define $\hat{\boldsymbol{x}}^{\text{min}}(\U) \triangleq \min(\hat{\boldsymbol{x}}(\U), \hat{\boldsymbol{x}}(S))$ and $\hat{\boldsymbol{x}}^{\text{max}}(\U) \triangleq \max(\hat{\boldsymbol{x}}(\U) - \hat{\boldsymbol{x}}(S), 0)$. Note that $\hat{\boldsymbol{x}}^{\text{min}}(\U) + \hat{\boldsymbol{x}}^{\text{max}}(\U) = \hat{\boldsymbol{x}}(\U)$. To show that $\frac{1}{2}R_3(\U) \leq R_4(S)$, we prove the following: 
\begin{enumerate}[label=(\Alph*)]
    \item $\sum_{j = 1}^{F} \hat{x}_{j}^{\text{min}}(\U) \leq R_4(S)$;
    \item $\sum_{j = 1}^{F} \hat{x}_{j}^{\text{max}}(\U) \leq R_4(S)$.
\end{enumerate}
By combining (A) and (B), we get that
\begin{equation*}
    \begin{aligned}
    R_3(\U) & = \sum_{j = 1}^{F} \hat{x}_{j}(\U) \\
    & = \sum_{j = 1}^{F} \hat{x}_{j}^{\text{min}}(\U) + \sum_{j = 1}^{F} \hat{x}_{j}^{\text{max}}(\U) \\
    & \leq 2 R_4(S).
    \end{aligned}
\end{equation*}

To show that (A) holds, we have
\begin{equation}
    \begin{aligned}
    \sum_{j = 1}^{F} \hat{x}_{j}^{\text{min}}(\U) & \stackrel{\text{(a)}}{\leq} \sum_{j = 1}^{F} \hat{x}_{j}(S)  \stackrel{\text{(b)}} = R_4(S),
    \end{aligned}
\end{equation} 
where (a) holds since $\hat{\boldsymbol{x}}^{\text{min}}(\U) \leq \hat{\boldsymbol{x}}(S)$; (b) holds from the definition of function $R_4$ in Eq. \eqref{eq:R_4_S_flows_based}.

Next, we show that (B) holds. Recall that by definition, the flow rate vector $\hat{\boldsymbol{x}}(\U)$ can be fully assigned to nodes $\U$. Therefore, any vector $\boldsymbol{\bar{x}}(\U) \leq \hat{\boldsymbol{x}}(\U)$ can be fully assigned to nodes $\U$ as well, and we can establish the following:
\begin{equation}
    \label{eq:U_from_smaller_flow_vector}
    R_3(\U | \boldsymbol{\bar{x}}(\U)) = \sum_{j = 1}^{F} \bar{x}_{j}(\U).
\end{equation}
 
Moreover, we have
   \begin{equation}
        \begin{aligned}
   \sum_{j = 1}^{F} \hat{x}_{j}^{\text{max}}(\U) & \stackrel{\text{(a)}} = R_3(\U | \hat{\boldsymbol{x}}^{\text{max}}(\U))\\
   &  \stackrel{\text{(b)}} \leq  R_3(\U | (\boldsymbol{\lambda} - \hat{\boldsymbol{x}}(S)))\\
        & \stackrel{\text{(c)}} \leq  \sum_{v \in \U} R_3(\{v\} | (\boldsymbol{\lambda} - \hat{\boldsymbol{x}}(S)))\\
        & \stackrel{\text{(d)}} =  \sum_{i = 1}^{|S|} R_4(v_i | (\boldsymbol{\lambda} - \hat{\boldsymbol{x}}(S)))\\
        & \stackrel{\text{(e)}} \leq  \sum_{i = 1}^{|S|} (R_4((v_1, \dots, v_{i-1}, v_i) |\boldsymbol{\lambda}) \\
        & - R_4((v_1, \dots, v_{i-1}) |\boldsymbol{\lambda}))\\
        & = R_4(S),
       \end{aligned}
    \end{equation} 
    where (a)-(e) hold for the following. (a) follows from Eq. \eqref{eq:U_from_smaller_flow_vector}. For (b), recall that $\hat{\boldsymbol{x}}^{\text{max}}(\U) = \max(\hat{\boldsymbol{x}}(\U) - \hat{\boldsymbol{x}}(S), 0)$. Since $\hat{\boldsymbol{x}}(\U) \leq \boldsymbol{\lambda}$, we get that $\hat{\boldsymbol{x}}^{\text{max}}(\U) \leq \boldsymbol{\lambda} - \hat{\boldsymbol{x}}(S)$. By applying Eq. \eqref{eq:compare_two_vectors_R_3}, we get that (b) holds. (c) holds because we consider each node in $\U$ individually with the same traffic rate vector $(\boldsymbol{\lambda} - \hat{\boldsymbol{x}}(S))$, so the resource allocation in the solution to nodes $\U$ is a feasible solution to each individual node. For (d), note that when applying Algorithm \ref{alg:fractional} for a singleton sequence, the equality constraints of Problem \eqref{eq:nodeCapacityAllocation} are irrelevant and function $R_4((v_i))$ and $R_3(\{v_i\})$ will get the same result, so (d) holds. (e) holds from Lemma \ref{lemma:v_from_lambda_S}, where $(v_1, \dots, v_{i-1}) \preceq S$ and node $v_i \notin \V((v_1, \dots, v_{i-1}))$.

Finally, we show that $R_4(S) \leq R_3(\U)$. We can see that the flow assignment matrix $\hat{\boldsymbol{X}}(S)$ satisfies the flow rate constraint of all flows and the resources constraint of nodes in $S$. Since $S \in \mathcal{P}(\U)$, then $\hat{\boldsymbol{X}}(S) \in \mathcal{X}(\U)$, and we get that $\hat{\boldsymbol{X}}(S)$ is a feasible solution to Problem \eqref{eq:relaxedAllocation}. Therefore, the inequality holds. 
   \end{proof}

\subsection{Proof of Theorem \ref{theorem:SSG}}
\label{proof:SSG}
    \begin{proof}
         First, we present Lemma \ref{lemma:node_marginal_value}, which shows that adding a node according to Line 2 of  Algorithm \ref{alg:SSG} yields a marginal value that is greater than or equal to the average of adding any other sequence of nodes. Let $S^{i} = (v_1, \dots, v_i)$ denote the sequence of nodes selected by Algorithm \ref{alg:SSG} in the first $i$ iterations, with $S^0$ to denote an empty sequence. Also, for any two sequences $S_1$ and $S_2$, we define $h(S_2|S_1) \triangleq h(S_1 \oplus S_2) - h(S_1)$.
         
     \begin{lemma}[\cite{alaei2010maximizing}]
        \label{lemma:node_marginal_value}
       Let $v_i$ denote the node selected in iteration $i$ by Algorithm \ref{alg:SSG}. For any sequence $S^\prime \in \HH$, we have $h(v_i|S^{i-1}) \geq \frac{1}{|S^\prime|} h(S^\prime|S^{i-1})$.
    \end{lemma}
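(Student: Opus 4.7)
The plan is to establish the inequality by combining a telescoping decomposition of $h(S'|S^{i-1})$ with two monotonicity-style bounds coming from sequence-submodularity and the greedy choice rule. First, I would write $S' = (u_1, u_2, \dots, u_m)$ with $m = |S'|$, and expand
\begin{equation*}
h(S'|S^{i-1}) \;=\; h(S^{i-1} \oplus S') - h(S^{i-1}) \;=\; \sum_{j=1}^{m} \bigl[\, h(S^{i-1} \oplus (u_1,\dots,u_j)) - h(S^{i-1} \oplus (u_1,\dots,u_{j-1})) \,\bigr],
\end{equation*}
so each summand is a single-element marginal contribution of the form $h(u_j \mid S^{i-1} \oplus (u_1,\dots,u_{j-1}))$.

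Next, I would upper-bound each such summand by $h(v_i \mid S^{i-1})$ in two steps. Apply the sequence-submodular property \eqref{eq:sequence_submodular} with the choices $S_1 \leftarrow S^{i-1}$, $S_2 \leftarrow S^{i-1} \oplus (u_1,\dots,u_{j-1})$ (noting that $S_1 \preceq S_2$) and $v \leftarrow u_j$; this yields
\begin{equation*}
h(u_j \mid S^{i-1} \oplus (u_1,\dots,u_{j-1})) \;\leq\; h(u_j \mid S^{i-1}).
\end{equation*}
Then invoke the greedy selection rule in Line~2 of Algorithm~\ref{alg:SSG}: because $v_i = \argmax_{v \in \V} \{h(S^{i-1} \oplus v) - h(S^{i-1})\}$, we have $h(u_j \mid S^{i-1}) \leq h(v_i \mid S^{i-1})$ for every $u_j \in \V$. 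Chaining these two bounds and summing over the $m$ telescoping terms gives $h(S'|S^{i-1}) \leq m \cdot h(v_i|S^{i-1}) = |S'| \cdot h(v_i|S^{i-1})$, from which the claim follows after dividing by $|S'|$.

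The argument has no real obstacle once the telescoping is in place; the only subtlety is to make sure the sequence-submodular inequality is applied in the right direction (so that the longer-prefix marginal is the one being upper bounded by the shorter-prefix marginal, rather than the reverse), and to handle the edge case $j=1$ where $(u_1,\dots,u_{j-1})$ is the empty sequence and the sequence-submodularity step reduces to an equality. No additional properties of $h$ (forward-monotonicity, $\alpha$-backward-monotonicity) are needed for this intermediate lemma; those will enter only in the outer argument for Theorem~\ref{theorem:SSG} that uses this lemma.
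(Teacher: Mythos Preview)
Your proof is correct and is the standard argument for this type of greedy lemma. Note that the paper itself does not supply a proof of this lemma: it simply states the result with a citation to \cite{alaei2010maximizing} and then uses it inside the proof of Theorem~\ref{theorem:SSG}. Your telescoping-plus-submodularity-plus-greedy argument is exactly the canonical proof one finds in that line of work, so there is nothing to compare against in the present paper beyond confirming that your derivation matches the intended external reference.
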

 
    Next, let $S^*$ denote the optimal sequence. Since function $h$ is forward-monotone, we can assume that the length of sequence $S^*$ is exactly $k$. Using Lemma \ref{lemma:node_marginal_value}, we have
    \begin{equation}
    \begin{aligned}
    \label{eq:node_marginal_value}
        h(v_i|S^{i-1}) & \geq \frac{1}{k} h(S^*|S^{i-1}) \\
        & = \frac{1}{k} (h(S^{i-1} \oplus S^*) - h(S^{i-1})) \\
        & \geq \frac{1}{k} (\alpha h(S^*) - h(S^{i-1})),
    \end{aligned}
    \end{equation}
    where  the last inequality holds because function $h$ is $\alpha$-backward-monotone. By rewriting Eq. \eqref{eq:node_marginal_value}, we have
    \begin{equation}
        h(S^{i}) - h(S^{i-1})  \geq \frac{1}{k} (\alpha h(S^*) - h(S^{i-1})),
    \end{equation}
    which is equivalent to 
     \begin{equation}
     \label{eq:recursive_i}
        h(S^{i})  \geq \frac{\alpha}{k} h(S^*) + (1-\frac{1}{k})h(S^{i-1})),
    \end{equation}
    Writing Eq. \eqref{eq:recursive_i} for $i = k$ and expanding it yields
    \begin{equation}
     \begin{aligned}
     \label{eq:recursive_k}
        h(S^{k})  \geq & \frac{\alpha}{k} h(S^*)  \\ 
        & + \frac{\alpha}{k}(1-\frac{1}{k})h(S^*) \\ 
        & + \frac{\alpha}{k}(1-\frac{1}{k})^2 h(S^*) \\
        & + \dots \\
        & + \frac{\alpha}{k}(1-\frac{1}{k})^{k-1} h(S^*)+ (1-\frac{1}{k})^k h(S^{0})) \\
        & = \alpha(1 - (1 - \frac{1}{k})^k) h(S^*) \\
        & \geq \alpha(1 - \frac{1}{e}) h(S^*),
     \end{aligned}
    \end{equation}
    where the last inequality holds because $(1 - \frac{1}{k})^k \leq \frac{1}{e}$.
    \end{proof}

\subsection{Proof of Lemma \ref{lemma:submodularity}}
\label{proof:submodularity}
    \begin{proof}
    First, for any two sequences $S_1, S_2,$ we assume that $S_1 \oplus S_2$ has no repeated nodes, because if otherwise, we can remove the later appearance of the same node without affecting the value of $R_4(S_1 \oplus S_2)$.
    
    Now, we proceed with the proof. First, we show that function $R_4(S)$ is forward-monotone (i.e., satisfies Eq. \eqref{eq:forward_monotone}). Since $S_1 \preceq S_1 \oplus S_2$, then according to Lemma \ref{lemma:prefixequal}, the assigned traffic to nodes in $S_1$ will be the same for the two sequences $S_1$ and $S_1 \oplus S_2$. Adding additional nodes to sequence $S_1$ will not affect the amount of traffic already assigned to nodes in $S_1$, and the minimum that can be assigned to any node is zero. So, Eq. \eqref{eq:forward_monotone} is satisfied.
    
    Next, we show that function $R_4(S)$ is $\frac{1}{2}$-backward-monotone (i.e., satisfies Eq. \eqref{eq:alpha_backward_monotone} with $\alpha = \frac{1}{2}$).  We have the following:
         \begin{equation}
        \begin{aligned}
        \label{eq:backward_bound}
            R_4(S_1 \oplus S_2) & \stackrel{\text{(a)}} \geq \frac{1}{2} R_3(\V(S_1 \oplus S_2)) \\
             & \stackrel{\text{(b)}} \geq \frac{1}{2} R_3(\V(S_2)) \\
             & \stackrel{\text{(c)}} \geq \frac{1}{2} R_4(S_2)
        \end{aligned}
    \end{equation} 
    where (a) follows from Lemma \ref{lemma:setToSequence}. For (b), note that function $R_3$ is monotonically nondecreasing because adding an additional VNF-node does not reduce the amount of flows that can be processed. Since $\V(S_2)$ is a subset of $\V(S_1 \oplus S_2)$ and function $R_3$ is nondecreasing, then (b) holds. (c) holds from Lemma \ref{lemma:setToSequence}.

     Finally, we show that  function $R_4(S)$ is sequence-submodular (i.e., satisfies Eq. \eqref{eq:sequence_submodular}). For  Eq. \eqref{eq:sequence_submodular} to be satisfied, we need to show that $R_4(S_1 \oplus u) -  R_4(S_1) \geq R_4(S_2 \oplus u) -  R_4(S_2)$ for any $S_1 \preceq S_2$ and $u \in \V$. We distinguish between two cases: 
     
     Case I: node $u \in \V(S_2)$. In this case, according to the definition of function $R_4$, a repeated node has zero marginal gain (i.e., $R_4(S_2 \oplus u) -  R_4(S_2) = 0$). Since function $R_4$ is forward-monotone, then we also get that $R_4(S_1 \oplus u) -  R_4(S_1) \geq 0$. Therefore, function $R_4$ is sequence-submodular.
     
     Case II: node $u \notin \V(S_2)$, which also implies that $u \notin \V(S_1)$. Let $\hat{\boldsymbol{X}}(S_2 \oplus u)$ denote the flow assignment matrix of sequence $S_2 \oplus u$ at the end of Algorithm \ref{alg:fractional}. Let $\hat{\boldsymbol{x}}(S_2 \oplus u)$ be the flow rate vector extracted from $\hat{\boldsymbol{X}}(S_2 \oplus u)$. We define $\bar{\boldsymbol{X}}(S_2) \triangleq [ \hat{\boldsymbol{X}}(S_2 \oplus u)_{(:, 1:|S_2|)}]$. Note that $\bar{\boldsymbol{X}}(S_2)$ is a possible realization of the flow assignment matrix of sequence $S_2$ at the end of Algorithm \ref{alg:fractional}. Let $\bar{\boldsymbol{x}}(S_2)$ be the flow rate vector extracted from $\bar{\boldsymbol{X}}(S_2)$. The proof proceeds as follows:
     \begin{equation}
        \begin{aligned}
           & R_4(S_2 \oplus u) - R_4(S_2)  \\
           = & \sum_{i = 1}^{|S_2|} \hat{y}_i(S_2 \oplus u) + \hat{y}_{|S_2 \oplus u|}(S_2 \oplus u) - \sum_{i = 1}^{|S_2|} \hat{y}_i(S_2)\\
             \stackrel{\text{(a)}} = & \hat{y}_{|S_2 \oplus u|}(S_2 \oplus u)\\
             = & \sum_{j = 1}^{F} \hat{x}_j(S_2 \oplus u) - \bar{x}_j(S_2) \\
             \stackrel{\text{(b)}} = & R_4(u | (\hat{\boldsymbol{x}}(S_2 \oplus u) - \bar{\boldsymbol{x}}(S_2))) \\
            \stackrel{\text{(c)}} \leq & R_4(u | (\boldsymbol{\lambda} - \bar{\boldsymbol{x}}(S_2))) \\
            \stackrel{\text{(d)}} \leq & R_4(S_1 \oplus u) - R_4(S_1),
        \end{aligned}
    \end{equation} 
     where (a) holds because $\hat{y}_i(S_2 \oplus u) = \hat{y}_i(S_2)$ for $i \leq |S_2|$ from Lemma \ref{lemma:prefixequal}; for (b), the flow rate vector $(\hat{\boldsymbol{x}}(S_2 \oplus u) - \bar{\boldsymbol{x}}(S_2))$ is what has been assigned to node $u$ while considering sequence $S_2 \oplus u$, so it can also be assigned to node $u$ when considering the singleton sequence consisting of only node $u$; (c) holds from Eq. \eqref{eq:compare_two_vectors_R_4} since $\hat{\boldsymbol{x}}(S_2 \oplus u) \leq \boldsymbol{\lambda}$; (d) holds from Lemma \ref{lemma:v_from_lambda_S} since $S_1 \preceq S_2$ and node $u \notin \V(S_1)$.
    \end{proof}
    
\subsection{Function $R_4$ $\frac{1}{2}$-backward-monotone: Tight Bound Proof}
\label{example:lower_bound}
    We show that the lower bound of $\frac{1}{2}$ in Eq. \eqref{eq:backward_bound} is tight through the following problem instance. Consider three nodes $v_1, v_2, v_3$, three flows $f_1, f_2, f_3$, and two types of resources $r_1, r_2$. Assume the following: the traffic rate of each flow is $\epsilon_1, ~ 1,$ and $1+\epsilon_2$, respectively, with $\epsilon_1 > \epsilon_2$ for arbitrary small $\epsilon_1$ and $\epsilon_2$; the amount of each resource at each node is $c$; the path of each flow is $(v_1, v_2),~ (v_2),$ and $(v_2, v_3)$, respectively. Let $\delta_f^r \triangleq \sum_{\phi \in  \Phi_f} \beta_\phi^r$ be the total amount of resource $r$ needed to process a unit of flow $f$ by the set of network functions $\Phi_f$. Assume that the following holds: 
    \begin{enumerate}[label=\roman*)]
        \item  $\delta_{f_1}^{r_1} \times \lambda_{f_1} ~ + ~ \delta_{f_2}^{r_1} \times \lambda_{f_2} = c$, \vspace{3pt}
        \item  $\delta_{f_1}^{r_2} \times \lambda_{f_1} ~ + ~ \delta_{f_2}^{r_2} \times \lambda_{f_2} = c$,\vspace{3pt}
        \item $\delta_{f_3}^{r_1} \times \lambda_{f_3} = \delta_{f_3}^{r_2} \times \lambda_{f_3} = c$,  \vspace{3pt}
        \item $\delta_{f_2}^{r_1} > \delta_{f_3}^{r_1}$.
    \end{enumerate} 
    It can be verified that if the above assumptions hold, then $\boldsymbol{y}((v_1, v_2, v_3)) = [\epsilon_1, 1+\epsilon_2, 0]$, while $\boldsymbol{y}((v_2, v_3)) = [1 + \epsilon_1, 1+ \epsilon_2]$. As a result, we get that $R_4(v_1, v_2, v_3) ~=~ 1+\epsilon_1 + \epsilon_2$ and $R_4(v_2, v_3) ~=~ 2+ \epsilon_1 + \epsilon_2$, for arbitrary small $\epsilon_1$ and $\epsilon_2$. 
    
     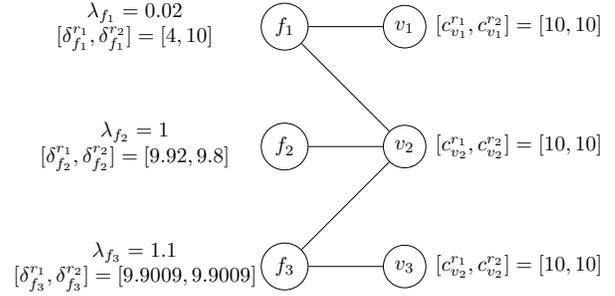
\begin{figure}
        \centering
        \resizebox{0.9\linewidth}{!}{
            \begin{tikzpicture}[transform shape]
            
            \node[vertex](f_1) at (0, 2) {$ f_1 $};
            \node[vertex](f_2) at (0, 0){$ f_2 $};
            \node[vertex ](f_3) at (0, -2){$ f_3 $};
            
            \node[vertex](v_1) at (2, 2) {$ v_1 $};
            \node[vertex](v_2) at (2, 0){$ v_2 $};
            \node[vertex ](v_3) at (2, -2){$ v_3 $};

            \begin{scope}[every path/.style={-}, every node/.style={inner sep=1pt}]

            \path (f_1) edge [ anchor= east] node {} (v_1);
            \path (f_1) edge [ anchor= south] node {} (v_2);
            \path (f_2) edge [ anchor= south] node {} (v_2); 
            \path (f_3) edge [ anchor= south] node {} (v_2);
            \path (f_3) edge [ anchor= south] node {} (v_3); 
            
            \node[align=center] at (-2.5,2) {$\lambda_{f_1} = 0.02$ \\ $[\delta_{f_1}^{r_1}, \delta_{f_1}^{r_2}] = [4, 10]$};
            
            \node[align=center] at (-2.5,0) {$\lambda_{f_2} = 1$ \\ $[\delta_{f_2}^{r_1}, \delta_{f_2}^{r_2}] = [9.92, 9.8]$};
            
            \node[align=center] at (-2.5,-2) {$\lambda_{f_3} = 1.1$ \\ $[\delta_{f_3}^{r_1}, \delta_{f_3}^{r_2}] = [9.9009, 9.9009]$};
            
            \node[align=center] at (3.9,2) {$[c_{v_1}^{r_1}, c_{v_1}^{r_2}] =  [10, 10]$};
            \node[align=center] at (3.9,0) {$[c_{v_2}^{r_1}, c_{v_2}^{r_2}] =  [10, 10]$};
             \node[align=center] at (3.9, -2) {$[c_{v_2}^{r_1}, c_{v_2}^{r_2}] =  [10, 10]$};
            \end{scope}
        \end{tikzpicture}}
        \caption{An example to show that the bound of $\frac{1}{2}$ in Eq. \eqref{eq:backward_bound} is tight. Computing $R_4( v_2, v_3)$ results in the following assignment vector $[1.02, 1.01]$, while computing $R_4(v_1, v_2, v_3)$ results in the following assignment vector $[0.02, 1.01, 0]$.}
        \label{fig:backward_bound}
    \end{figure}

    The following is an example (also presented in Fig. \ref{fig:backward_bound}) that satisfies the aforementioned assumptions. Consider the traffic rate of each flow to be $0.02, 1,$ and $1.01$, respectively, and the amount of each resource at each node to be $10$. Also, assume that the demand of each resource by each flow (i.e., $[\delta_{f}^{r_1}, \delta_{f}^{r_2}]$) to be $[40, 100], ~ [9.2, 8],$ and $[9.9009, 9.9009]$, respectively. First, we evaluate $R_4(v_2, v_3)$. We start with an initial resource allocation $\boldsymbol{y}((v_2, v_3))=[0, 0]$. We evaluate $R_4(v_2, v_3)$ using Algorithm \ref{alg:fractional}. We start with node $v_2$ and solve Problem \eqref{eq:nodeCapacityAllocation} for node $v_2$ given $\boldsymbol{y}((v_2, v_3))$. The algorithm will assign flow $f_1$ and $f_2$ to node $v_2$, and the total traffic assigned to node $v_2$ is equal to $1.02$. If we try to assign any combinations of flows $f_1$ and $f_3$ or of flows $f_2$ and $f_3$, the most we can assign to node $v_2$ is $1.01$. Therefore, we update $\boldsymbol{y}((v_2, v_3)) = [1.02, 0]$. Next,  we solve Problem \eqref{eq:nodeCapacityAllocation} for node $v_3$ given $\boldsymbol{y}((v_2, v_3))$. The flows assigned to node $v_2$ will remain the same and flow $f_3$ will be assigned to node $v_3$. The result is $\boldsymbol{y}((v_2, v_3))=[1.02, 1.01]$. Therefore, the value of $R_4(v_2, v_3)$ is 2.03. 
    
    Next, we evaluate $R_4(v_1, v_2, v_3)$. The initial resource allocation is $\boldsymbol{y}((v_1, v_2, v_3))=[0, 0, 0]$. We solve Problem \eqref{eq:nodeCapacityAllocation} for node $v_1$ given $\boldsymbol{y}((v_1, v_2, v_3))$. The result is that flow $f_1$ will be assigned to node $v_1$ and $\boldsymbol{y}((v_1, v_2, v_3))$ becomes $[0.02, 0, 0]$. We repeat the same steps for node $v_2$. The maximum traffic that can be assigned to node $v_2$ given $\boldsymbol{y}((v_1, v_2, v_3))$ is $1.01$, and the only way to achieve that is by assigning flow $f_3$ to node $v_2$. If we try to assign portions of flows $f_2$ and $f_3$, we will always end up with total less than $1.01$. The reason for this is that $\delta_{f_3}^{r_1} \times \lambda_{f_3} = 10$ and if we want to replace a unit of flow $f_3$ with a unit of flow $f_2$, we will not be able to do so because flow $f_2$ is more expensive than flow $f_3$ (i.e., $\delta_{f_2}^{r_1} > \delta_{f_3}^{r_1}$). We update $\boldsymbol{y}((v_1, v_2, v_3))$ to become $[0.02, 1.01, 0]$. Finally, we solve Problem \eqref{eq:nodeCapacityAllocation} for node $v_3$ given $\boldsymbol{y}((v_1, v_2, v_3))$. In order to satisfy the equality constraints of nodes $v_1$ and $v_2$, we have to assign flow $f_1$ to node $v_1$ and flow $f_3$ to node $v_2$, as explained before. Therefore, we will not be able to assign any traffic to node $v_3$. In this case, the value of $R_4((v_1, v_2, v_3))$ is $1.03$. 

\bibliographystyle{IEEEtran}
\bibliography{references}

\end{document}